\newcommand{\INDSTATE}[1][1]{\State\hspace{1cm}}
\newlength\myindent
\algrenewcommand\algorithmicrequire{\textbf{Precondition:}}
\algrenewcommand\algorithmicindent{1mm}
\newcommand{\size}[1]{\ensuremath{\left|#1\right|}}
\newcommand{\imply}{\supset}
\newcommand{\mil}{\ensuremath{\mathbf{M}_{\imply}}}
\newcommand{\SP}{$\mathcal{S}\mathcal{P}$}
\newcommand{\Branch}[1]{\overrightarrow{#1}}
\newcounter{specline}
\newcommand{\spec}[3][{}]{%
   \setcounter{specline}{0}%
   \ifthenelse{\equal{#1}{numbers}}%
      {\setboolean{numberspec}{true}}%
      {\setboolean{numberspec}{false}}%
   \ensuremath{
      \begin{array}{l}
         \ifthenelse{\equal{#2}{}}
            {\numberedline}
            {#2\nl}
         #3
      \end{array}
   }
}
\begin{document}
  \title{On the Intrinsic Redundancy in Huge Natural Deduction proofs II: Analysing $M_{\imply}$ Super-Polynomial Proofs}
  \author{Edward Hermann Haeusler$^{1}$ \\ Departmento de Inform\'{a}tica\\ PUC-Rio\\
    Rio de Janeiro, Brasil \\
   $^1$Email:hermann@inf.puc-rio.br}

\maketitle
  
   \begin{abstract}

     This article precisely defines huge proofs within the system of Natural Deduction for the Minimal implicational propositional logic \mil. This is what we call an unlimited family of super-polynomial proofs. We consider huge families of expanded normal form mapped proofs, a device to explicitly help to count the E-parts of a normal proof in an adequate way. Thus, we show that for almost all members of a super-polynomial family there at least one sub-proof or derivation of each of them that is repeated super-polynomially many times. This last property we call super-polynomial redundancy. Almost all, precisely means that there is a size of the conclusion of proofs that every proof with conclusion bigger than this size and that is huge is highly redundant too. This result points out to a refinement of compression methods previously presented and an alternative and simpler proof that CoNP=NP.

\end{abstract}


\section{Introduction}

This article describes a result that is an extension of the obtained in \cite{Exponential}. Both articles discuss the relationship between the size of proof and how redundant it is. The redundancy in a proof or logical derivation is related to the fact that this proof has sub-proofs that are repeated many times inside it. The focus of this article is Natural Deduction (ND) proofs in the purely implicational minimal logic $\mil$. The reason to work with this logic lies in the fact that $\mil$ is PSPACE-complete and simulates polynomially any proof in Intuitionistic Logic and full minimal logic, being hence an adequate representative to study questions regarding computational complexity. The fact that $\mil$ has a straightforward syntax and an ND system too is worthy of notice. Moreover, compressing proofs in $\mil$ can provide very good glues to compress proofs in any one of these mentioned systems, even for the Classical Propositional Logic. One of the reasons to study redundancy is to obtain a compressing method based on redundancies removals. 
In \cite{Exponential} we identify sets of huge proofs with sets of proofs that, when viewed as strings, have their length lower-bounded by some exponential function. Moreover, we can consider, without loss of generality,  proof/deductions, which are square height bounded, as stated in \cite{Studia2019}.  In \cite{Exponential}, we prove that the exponentially lower-bounded $\mil$ proofs are redundant, in the sense that there is at least one sub-proof for each proof that occurs exponentially many times in it.

Here, in this article,  we show that this result extends to super-polynomial proofs, i.e., proofs that are lower-bounded by any polynomial. In this article, we go further and identify huge proofs/derivations with super-polynomial sized proofs.  We prove that, in any set of super-polynomially lower-bounded proofs in $\mil$, of some tautologies, almost all proof is redundant. Redundancy means that for almost every proof in this set of huge proofs, there is a sub-proof that occurs super-polynomially many times in it. The technique used in this new proof has a structure that is quite similar to the proof reported in \cite{Exponential}. In order to facilitate the task of reading this article, without deviating the reader to read the material in the previous work, we will repeat here the main definitions and the technical part in \cite{Exponential}, contextualizing for this article.  

In section~\ref{sec:Background}, we present the background and terminology used. Section~\ref{sec:Main} is the section where we prove the main result of this article. In conclusion, section~\ref{sec:Conclusion}  we briefly discuss the use of the result proven here to obtain theoretical compressions methods that provide a super-polynomial compression ratio, for some super-polynomial sets of proofs,  all of them compress to polynomial size.   
Finally, we want to mention that, in \cite{Exponential}, we show three sets of exponentially lower-bounded sized proofs that are linearly bounded on the height. The reader can find these examples useful material for analyzing good concrete cases, including certificates for non-hamiltonian graphs. We want to comment, that, we do not know any concrete, and, easy to define, set of lower-bounded super-polynomial proofs that are not themselves exponentially lower-bounded, as are the examples in \cite{Exponential} and \cite{exponential}. 
Finally, in \cite{Exponential}, we prove the main results using the language of graphs and trees,  with the sake of more comfortable explanations. In this work, this is not possible any more; the use of proofs, formulas, syntax-tree and concepts from proof-theory is essential for presenting the results we deal within this article. The tree and graph terminology, however, is used. We present this in the next section.

\section{A brief explanation of Natural Deduction and some basic proof-theoretical concepts}
\label{sec:Background}

The Natural Deduction system defined by Gentzen in \cite{Gentzen1936}),  is defined as a set of rules that settle the concept of a logical deduction.
Language and inference rules can be viewed as a {\em logical calculus}, as defined by Church (\cite{Church}). In contrast with the main formulations of logical calculus for some logics by Hilbert (\cite{HilbertHis}), Natural Deduction does not have axioms. Natural Deduction implements in the level of the logical calculus the  {\em (meta)theorem of  deduction}, namely from $\Gamma, A\vdash A\imply B$, employing the discharging mechanism. The $\imply$-introduction rule shows how this discharging mechanism implements in the logic calculus the {\em deduction theorem}.

\begin{prooftree}
  \AxiomC{[$A$]}
  \noLine
  \UnaryInfC{$\Pi$}
  \noLine
  \UnaryInfC{$B$}
  \RightLabel{$\imply$-Intro}
  \UnaryInfC{$A\imply B$}
\end{prooftree}

We embrace formulas occurrence  $A$ in the derivation $\Pi$ of $B$ from $A$ with a pair of [] to indicate the discharge of them. Embracing a formula occurrence means that from the application of the $\imply$-Intro rule discharging this occurrence of the $\imply$-Intro down to the conclusion of the derivation, the inferred formulas do not depend anymore on discharged $A$.
The choice of formulas to be discharged in an application of a $\imply$-Intro discharges is arbitrary and liberal. The range of this choice goes from every occurrence of $A$ until none of them. The following derivations show two different ways of deriving $A\imply (A\imply A)$. Observe that in both deductions or derivations, we use numbers to indicate which is the application of the $\imply$-Intro that discharged the marked formula occurrence. For example, in the right derivation, the upper application discharged the marked occurrences of $A$, while in the left derivation, it is the lowest application that discharges the formula occurrences $A$. There is a third derivation that both applications do note discharge any $A$, and the conclusion $A\imply(A\imply A)$ keep depending on $A$. This third alternative appears in figure~\ref{third}. Natural Deduction systems can provide logical calculi without any need to use axioms. In this article, we focus on the system formed only by the $\imply$-Intro rule and the $\imply$-Elim rule, as shown below, also known by {\em modus ponens}. The logic behind this logical calculus is the purely minimal implicational logic, $M_{\imply}$.

\begin{prooftree}
  \AxiomC{$A$}
  \AxiomC{$A\imply B$}
  \RightLabel{$\imply$-Elim}
  \BinaryInfC{$B$}
\end{prooftree}

 We can substitute the liberal discharging mechanism by a greedy one that discharges every possible formula occurrence whenever the $\imply$-Intro is applied. Observe that, in this case, the derivation in figure~\ref{third} would not be possible anymore. Completeness regarding derivability would be lost. However, when considering proofs, i.e.,  derivations with no assumption undischarged, the greedy version of the $\imply$-Intro is enough to ensure the demonstrability of valid formulas, see~\ref{appendix:greedy-imply-intro}, and hence for the representation of proofs we can use $\mil^{\rightarrow}$

\begin{prooftree}
  \AxiomC{$[A]^1$}
  \UnaryInfC{$A\imply A$}
  \RightLabel{$\;^1$}
  \UnaryInfC{$A\imply (A\imply A)$}
  \AxiomC{$[A]^1$}
  \RightLabel{$\;^1$}
  \UnaryInfC{$A\imply A$}
  \UnaryInfC{$A\imply (A\imply A)$}
  \noLine
  \BinaryInfC{}
\end{prooftree}

\begin{figure}[H]
\begin{prooftree}
  \AxiomC{$A$}
  \UnaryInfC{$A\imply A$}
  \UnaryInfC{$A\imply (A\imply A)$}
\end{prooftree}
\caption{Two vacuous $\imply$-Intro applications}\label{third}
\end{figure}

In \cite{Exponential}, with the sake of having simpler proofs of our results, we consider Natural Deduction as trees. For any derivation in ND, there is a binary tree having nodes labelled by the formulas and edges linking premises to conclusion, such that the root of the tree would be the conclusion of the derivation, and the leaves are its assumptions. The derivation in figure~\ref{ex:derivacao} has the tree in figure~\ref{ex:tree} representing it.  The set of formulas that the label of $u$ depends on the label of $v$  labels the edge from  $v$ to  $u$. This set of formulas is called the dependency set of the label of $u$ from the label of $v$. In this way, the $\imply$-intro, in fact, its greedy version,  removes the discharged formula from the dependency set, as shown in figure~\ref{ex:tree}. Note that because of this labelling of edges by dependency sets, we need one more extra edge and the root node. The dependency set of the conclusion labels this edge. That is the reason for the edge linking the conclusion to the dot in figure~\ref{ex:tree}.

\begin{figure}[H]
  \begin{prooftree}
  \AxiomC{$[A]^{1}$}
  \AxiomC{$A\imply B$}
  \BinaryInfC{$B$}
  \AxiomC{$B\imply C$}
  \BinaryInfC{$C$}
  \UnaryInfC{$\LeftLabel{1}A\imply C$}
  \end{prooftree}
\caption{A derivation in $M_{\imply}$}\label{ex:derivacao}
\end{figure}

\begin{figure}[H]
      \begin{tikzpicture}[shorten >=1pt,node distance=1cm,auto]
\tikzstyle{state}=[shape=circle,thick,minimum size=1.5cm]
\node[state] (ground) {.};
\node[state, above of=ground] (AC) {$A\imply C$};
\node[state, above of=AC] (C) {$C$};
\node[state, above left of=C] (B) {$B$};
\node[state, above left of=B] (A) {$A$};
\node[state, above right of=B] (AB) {$A\imply B$};
\node[state, above right of=C] (BC) {$B\imply C$};
\path [draw,->] (ground) edge node[midway] {{\tiny $\{A\imply B, B\imply C\}$}} (AC);
\path [draw,->] (AC) edge node[midway] {{\tiny $\{A,A\imply B, B\imply C\}$}} (C);
\path [draw,->] (C) edge node[midway,right] {{\tiny $\{B\imply C\}$}} (BC);
\path [draw,->] (C) edge node[midway, left] {{\tiny $\{A,A\imply B\}$}} (B);
\path [draw,->] (B) edge node[midway, right] {{\tiny $\{A\imply B\}$}} (AB);
\path [draw,->] (B) edge node[midway, left] {{\tiny $\{A\}$}} (A);
    \end{tikzpicture}
\caption{The tree representing the derivation in figure~\ref{ex:derivacao}}\label{ex:tree}
\end{figure}

We use bitstrings induced by an arbitrary linear ordering of formulas in order to have a more compact representation of the dependency sets. Taking into account that only subformulas of the conclusion can be in any dependency set, we only need bitstrings of the size of the conclusion of the proof. In figure~\ref{ex:bitstring} we show this final form of tree representing the derivation in figure~\ref{ex:derivacao} and~\ref{ex:tree}, when the linear order $\prec$ is $A\prec B\prec C\prec A\imply B\prec B\imply C\prec A\imply C$. 

\begin{figure}[H]
    \begin{tikzpicture}[shorten >=1pt,node distance=1cm,auto]
\tikzstyle{state}=[shape=circle,thick,minimum size=1.5cm]
\node[state] (ground) {.};
\node[state, above of=ground] (AC) {$A\imply C$};
\node[state, above of=AC] (C) {$C$};
\node[state, above left of=C] (B) {$B$};
\node[state, above left of=B] (A) {$A$};
\node[state, above right of=B] (AB) {$A\imply B$};
\node[state, above right of=C] (BC) {$B\imply C$};
\path [draw] (ground) edge node[midway] {{\tiny $000110$}} (AC);
\path [draw] (AC) edge node[midway] {{\tiny $100110$}} (C);
\path [draw] (C) edge node[midway,right] {{\tiny $000010$}} (BC);
\path [draw] (C) edge node[midway, left] {{\tiny $100100$}} (B);
\path [draw] (B) edge node[midway, right] {{\tiny $000100$}} (AB);
\path [draw] (B) edge node[midway, left] {{\tiny $100000$}} (A);
    \end{tikzpicture}
\caption{Tree with bitstrings representing the derivation in figure~\ref{ex:derivacao}}\label{ex:bitstring}
\end{figure}



In the sequel of this section, we only briefly present the list of main results and definitions, from Natural Deduction proof-theory,  easing the reader's task to understand our proof. 

We explain the primary rationale of this article as follows. The sub-formula principle for a logic $\mathcal{L}$ states that all we need to prove a tautology is inside itself. That is, without loss of generality (w.l.g.), If $\alpha$ is a tautology, then there is a proof of $\alpha$ using only sub-formulas of $\alpha$ in it. This property is a corollary of the Normalization theorem for Natural Deduction, a central result, and a tool of proof-theory. Well, \mil satisfies the normalization and hence the sub-formula principle. We note that the amount of sub-formulas of any formula is linear on its size. A Natural Deduction proof is huge whenever its size is larger than or equal to any polynomial on the size of its conclusion\footnote{If we follow Cook-Karp conjecture that says that computationally easy to verify and to compute objects are of polynomial-size, huge proofs are the hard proofs for verification, namely, the super-polynomial ones}. Thus, if the size of a proof is bigger than any polynomial, then its corresponding labelled tree is also bigger than any polynomial.  We remind that each sub-formula is a possible label node in the tree. We have then that a  super-polynomial normal proof has to be labelled with linearly many labels, regarding the size/length of the string that labels its root. This configuration allows us to say that at least one label repeats super-polynomially many times in the tree under the additional condition that it is also linearly-height bounded. We show that this repetition induces, in some way, sub-proof repetitions, such that, this sub-proof repeats super-polynomially many times too.

Without loss of generality, we consider the additional hypothesis on the linear bound on height of the proof of \mil tautologies. In \cite{Studia2019}, we show that any tautology in \mil has a Natural Deduction normal proof of height bound by the square of the size of this tautology.
However, if we consider the complexity class $CoNP$ (see the appendix in ~\cite{Exponential}) we are naturally limited to linearly height-bounded proofs. The proofs, in \mil,  of the non-hamiltonianicity of  graphs, are linearly height bounded.



\section{Terminology and definitions}\label{terminology}

Following the usual terminology in Natural Deduction and proof-theory, we briefly describe what we use in this article. This section is strongly base in \cite{Exponential}. We consider the usual definition of the syntax tree
for \mil-formulas. Given a formula $\phi_1\mil \phi_2$ in \mil, we call $\phi_2$ its right-child and $\phi_1$ its left-child. These formulas label the respective right and left child vertexes that are labelled with them. A right-ancestral of a vertex $v$ in a syntax-tree $T_{\alpha}$ of a formula $\alpha$ is any vertex $u$, such that, either $v$ is the right-child of $u$, or, there is a vertex $w$, such that $v$ is the right-child of $w$ and $u$ is right-ancestral of $w$.  

The left premise of a $\imply$-Elim rule is called a minor premise, and the right premise is called the major premise. We should note that the conclusion of this rule, as well as its minor premise, are sub-formulas of its major premise. We also observe that the premise of the $\imply$-Intro is the sub-formula of its conclusion. A derivation is a tree-like structure built using $\imply$-Intro and $\imply$-Elim rules. We have some examples depicted in the last section. The conclusion of the derivation is the root of this tree-like structure, and the leaves are what we call top-formulas. A proof is a derivation that has every top-formula discharged by a $\imply$-Intro application in it. The top-formulas are also called assumptions. An assumption that it is not discharged by any $\imply$-Intro rule in a derivation is called an open assumption. If $\Pi$ is a derivation with conclusion $\alpha$ and $\delta_1,\ldots,\delta_n$ as all of its open assumptions then we say that $\Pi$ is a derivation of $\alpha$ from $\delta_1,\ldots,\delta_n$.

\begin{definition}[Branch]\label{def:Branch} A branch in a derivation or proof $\Pi$ is any sequence $\beta_1,\ldots,\beta_k$ of formula occurrences in $\Pi$, such that:
\begin{itemize}
\item $\delta_1$ is a top-formula, and;
\item For every $i=1,k-1$,  either $\beta_i$ is a $\imply$-Elim major premise of $\beta_{i+1}$ or $\beta_i$ is a $\imply$-Intro premise of $\beta_{i+1}$, and;
\item $\delta_k$ either is the conclusion of the derivation or the minor premise of a $\imply$-Elim.
\end{itemize}
\end{definition}

A normal derivation/proof in \mil is any derivation that does not have any formula occurrence that is simultaneously a major premise of a $\imply$-Elim and the conclusion of a $\imply$-Intro. A formula occurrence that is at the same time a conclusion of a $\imply$-Intro and a major premise of $\imply$-Elim is called a maximal formula. 
In \cite{Prawitz} proves the following theorem for the Natural Deduction for the full\footnote{The full propositional fragment is $\{\lor, \land, \imply, \neg, \bot\}$} propositional fragment of minimal logic.      

\begin{theorem}[Normalization]
  Let $\Pi$ be a derivation of $\alpha$ from $\Delta=\{\delta_1,\ldots, \delta_{n}\}$. There is a normal proof $\Pi^{\prime}$ of $\alpha$ from $\Delta^{\prime}\subseteq\Delta$.
\end{theorem}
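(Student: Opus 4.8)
The plan is to prove the Normalization theorem for the minimal implicational fragment $\mil$ by the standard Prawitz-style argument, adapted to the fact that $\imply$ is the only connective. First I would define a measure on maximal formulas: to each maximal formula occurrence $A \imply B$ I associate its \emph{degree}, the number of symbols (or the complexity) of $A \imply B$, and then I would define the \emph{rank} of a derivation $\Pi$ as the pair $\langle d, n \rangle$ where $d$ is the maximal degree among all maximal formulas in $\Pi$ and $n$ is the number of maximal formulas of that degree, ordered lexicographically. A derivation is normal exactly when it has no maximal formula, i.e.\ when this measure is undefined/zero. The proof proceeds by induction on this measure.

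The core of the argument is the \emph{reduction step}. Suppose $\Pi$ is not normal; pick a maximal formula occurrence $A\imply B$ of maximal degree $d$ that is, moreover, chosen so that no maximal formula of degree $d$ appears above it (such a ``topmost'' one exists). This occurrence is the conclusion of an $\imply$-Intro discharging some occurrences of $A$ in a subderivation $\Pi_1$ of $B$, and it is simultaneously the major premise of an $\imply$-Elim whose minor premise is a subderivation $\Pi_2$ of $A$. The reduction replaces this configuration by: take $\Pi_1$, and substitute the derivation $\Pi_2$ of $A$ for each discharged occurrence of the assumption $A$, obtaining a derivation $\Pi_1[\Pi_2/A]$ of $B$ which is then plugged into the rest of $\Pi$ in place of the old conclusion $B$ of the $\imply$-Elim. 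One then checks that this operation (i) does not introduce any new maximal formula of degree $\geq d$ — the only new maximal formulas can arise at the places where the root of a copy of $\Pi_2$ meets a rule of $\Pi_1$, and these have degree equal to the degree of $A$, which is strictly less than $d = $ degree of $A\imply B$ — and (ii) strictly decreases the rank, since it removes this maximal formula of degree $d$ (possibly creating several of smaller degree). By the induction hypothesis applied to the reduced derivation we get a normal derivation of $\alpha$; and since substituting $\Pi_2$ for discharged copies of $A$ and deleting a rule pair can only remove open assumptions (the open assumptions of $\Pi_1[\Pi_2/A]$ plus the rest are contained in those of $\Pi$, as $\Pi_2$'s open assumptions were already open in $\Pi$), the final conclusion is derived from some $\Delta' \subseteq \Delta$. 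That the result is a \emph{proof} when $\Pi$ was a proof follows because no reduction un-discharges an assumption.

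I expect the main obstacle to be the careful bookkeeping in the reduction step, specifically verifying claim (i): that duplicating $\Pi_2$ (once for each discharged occurrence of $A$) does not create maximal formulas whose degree is $\geq d$, and that the possibly many new maximal formulas it does create — at the interface between each copy of $\Pi_2$ and the inference in $\Pi_1$ immediately below the corresponding $A$-leaf — all have degree strictly below $d$. One must also handle the subtlety that the minor premise of the $\imply$-Elim could itself have been, or become, relevant to a branch, and confirm the measure still drops; phrasing everything in terms of the lexicographic pair $\langle d, n\rangle$ is what makes the induction go through cleanly even when one maximal formula of degree $d$ is traded for several of lower degree. Since the fragment here has only $\imply$, there are no commutative/permutation reductions to worry about (those arise only for $\lor$ and $\bot$), so this single $\imply$-reduction suffices, and the argument is a genuine simplification of Prawitz's full proof cited as \cite{Prawitz}.
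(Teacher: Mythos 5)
The paper offers no proof of this theorem: it is imported directly from \cite{Prawitz} ("In \cite{Prawitz} proves the following theorem\ldots"), so your proposal has to be measured against the standard normalization argument rather than against anything in the text. Your outline is indeed that standard argument, specialized to the purely implicational fragment, and its overall architecture (lexicographic induction on $\langle d,n\rangle$, a single $\beta$-style reduction, no permutative conversions) is the right one.

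There is, however, one concrete gap in the step where you claim the measure strictly decreases. You select a maximal formula $A\imply B$ of maximal degree $d$ such that no maximal formula of degree $d$ appears \emph{above it}; but ``above the occurrence $A\imply B$'' covers only the subderivation $\Pi_1$ of $B$ from the discharged $[A]$, not the subderivation $\Pi_2$ of the minor premise $A$, which sits in a different branch above the conclusion $B$ of the $\imply$-Elim. The reduction substitutes a copy of $\Pi_2$ for \emph{each} discharged occurrence of $A$ in $\Pi_1$; if $\Pi_2$ itself contains a maximal formula of degree $d$ and $A$ is discharged at two or more places, those degree-$d$ maximal formulas are duplicated and $n$ goes up rather than down, so the induction fails for your choice of redex. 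Your closing remark about ``the subtlety that the minor premise\ldots could itself have been, or become, relevant'' gestures at this but does not resolve it. The standard repair is to strengthen the selection criterion: choose a maximal formula of degree $d$ such that no \emph{other} maximal formula of degree $d$ occurs anywhere in the subderivation ending at the conclusion of the associated $\imply$-Elim (a subderivation that contains both $\Pi_1$ and $\Pi_2$); such a topmost redex exists because the proof tree is finite. With that one change your points (i) and (ii) go through: every newly created maximal formula is an occurrence of $A$ and has degree $<d$, duplication multiplies only maximal formulas of degree $<d$, and exactly one maximal formula of degree $d$ is removed. Your remaining claims --- that the open assumptions of the reduct are among those of $\Pi$ (vacuous discharge simply deletes $\Pi_2$), hence $\Delta'\subseteq\Delta$, and that a proof reduces to a proof --- are correct as stated.
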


In any normal derivation/proof, the format of a branch provides worth information on why huge proofs are redundant, as we will see in the next sections. Since no formula occurrence can be a major premise of $\imply$-Elim and conclusion of a $\imply$-Intro rule in a branch we have that the conclusion of a $\imply$-Intro can only be the minor premise of a $\imply$-Elim or it is not a premise of any rule application at all in the same branch. In this last case, it is the conclusion of the derivation or minor premise of a $\imply$-Elim rule. In any case, it is the last formula in the branch. Thus, for any branch, any conclusion of a $\imply$-Intro has to be a premise of a $\imply$-Intro. Hence, any branch in a normal derivation is divided into two parts (possibly empty). The E-part that starts the branch with the top-formula and every formula occurrence in it is the major premise of a $\imply$-Elim. There is a formula occurrence that is the conclusion of a $\imply$-Elim and premise of a $\imply$-Intro rule that is called minimal formula of the branch. The minimal formula starts the I-part of the branch, where every formula is the premise of a $\imply$-Intro, excepted the last formula of the branch. From the format of the branches, we can conclude that the sub-formula principle holds for normal proofs in Natural Deduction for \mil, in fact, for many extensions of it. A branch in $\Pi$ is said to be a principal branch if its last formula is the conclusion of $\Pi$. A secondary branch is a branch that is not principal. The primary branch is called a 0-branch. Any branch that the last formula is the minor premise of a rule in the E-part of a $n$-branch is a $n+1$-branch.

\begin{corollary}[Sub-formula principle]\label{coro:SubForProperty}
  Let $\Pi$ be a normal derivation of $\alpha$ from $\Delta=\{\delta_1,\ldots,\delta_m\}$. It is the case that for every formula occurrence $\beta$ in $\Pi$, $\beta$ is a sub-formula of either $\alpha$ or of some of $\delta_i$.
\end{corollary}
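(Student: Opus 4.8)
\medskip
\noindent\emph{Proof idea.} The plan is to read the statement off the anatomy of branches in a normal derivation. First I would note that every formula occurrence of $\Pi$ lies on exactly one branch: starting from the occurrence one goes downwards through major premises of $\imply$-Elim and premises of $\imply$-Intro until the conclusion of $\Pi$ or a minor premise of an $\imply$-Elim is reached, and upwards through conclusions of rules until a top-formula is reached; the resulting sequence is a branch in the sense of Definition~\ref{def:Branch}. So it suffices to prove that every formula occurrence on a branch of $\Pi$ is a sub-formula of $\alpha$ or of some $\delta_i$, and I would do this by induction on the order of the branch ($0$-branch, $1$-branch, $\dots$), which is finite for every branch because iterating the ``hangs from'' relation strictly decreases the distance from the root of the branch's last formula. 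Two facts recorded in the text above are used throughout: (i) the conclusion and the minor premise of an $\imply$-Elim are sub-formulas of its major premise, and (ii) the premise of an $\imply$-Intro is a sub-formula of its conclusion. By (i), every formula on a branch from its top-formula $\beta_1$ down to its minimal formula is a sub-formula of $\beta_1$; by (ii), every formula from the minimal formula down to the last formula $\beta_k$ is a sub-formula of $\beta_k$.

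For the induction step fix a branch $B=\beta_1,\dots,\beta_k$ of order $n$. I would first handle $\beta_k$: either $\beta_k$ is the conclusion of $\Pi$, so $\beta_k=\alpha$; or $\beta_k$ is the minor premise of an $\imply$-Elim whose major premise lies in the E-part of an $(n{-}1)$-branch, which by the induction hypothesis is a sub-formula of $\alpha$ or of some $\delta_i$, hence so is $\beta_k$ by (i). Consequently, by (ii), every formula from the minimal formula of $B$ down to $\beta_k$ is a sub-formula of $\alpha$ or of some $\delta_i$. Next I would handle the top-formula $\beta_1$: if it is an open assumption it is one of the $\delta_i$; otherwise $\beta_1$ is discharged by some $\imply$-Intro $\rho$ of $\Pi$, and, since $\rho$ discharges $\beta_1$, its conclusion $\beta_1\imply C$ lies on the path from $\beta_1$ to the root of $\Pi$. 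On that path the branch-order equals $n$ while on $B$ and is strictly smaller afterwards, so either $\beta_1\imply C$ lies on a branch of order $<n$ and the induction hypothesis makes it a sub-formula of $\alpha$ or of some $\delta_i$; or it lies on $B$, in which case, being the conclusion of an $\imply$-Intro, normality forbids it to be the major premise of an $\imply$-Elim, so it is not in the E-part of $B$, whence it lies strictly below the minimal formula of $B$ (or is $\beta_k$ itself) and is therefore a sub-formula of $\beta_k$ by (ii), already dealt with. Thus $\beta_1\imply C$, and so $\beta_1$, is a sub-formula of $\alpha$ or of some $\delta_i$; then by (i) every formula from $\beta_1$ down to the minimal formula is too. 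Since $\beta_1,\dots,\beta_k$ exhausts $B$, this completes the step; the base case $n=0$ is the principal branch, whose last formula is $\alpha$, so no induction hypothesis is needed there. As every occurrence of $\Pi$ lies on a branch, the corollary follows.

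The step that needs genuine care — and the only place normality is used — is the treatment of the top-formula: one must see that the $\imply$-Intro discharging $\beta_1$ has its conclusion on $B$ or on a branch of strictly smaller order, and, when on $B$, not in the E-part. The first point is a small piece of tree geometry, namely that a discharging inference sits below the assumption it discharges, hence on the root-ward path from it, a path that visits branches of non-increasing order. The second point is exactly the absence of maximal formulas in a normal derivation: the conclusion of an $\imply$-Intro cannot at the same time be the major premise of an $\imply$-Elim, which is the only position on a branch where it would fail to be visibly a sub-formula of that branch's last formula. Everything else is routine bookkeeping with (i) and (ii).
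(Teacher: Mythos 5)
Your proof is correct and follows essentially the same route the paper intends: the paper derives the corollary directly ``from the format of the branches'' (E-part formulas are sub-formulas of the top-formula, I-part formulas of the last formula, plus an induction on the $n$-branch order), and your write-up is a careful elaboration of exactly that argument, including the two points the paper leaves implicit (the treatment of discharged top-formulas and the use of normality to exclude the maximal-formula position).
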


This corollary ensures that without loss of generality, any Natural Deduction proof of a \mil tautology has only sub-formulas of it occurring in it. In \cite{Exponential} we defined $EOL$-trees as an abstraction for normal proofs/derivations. They are forms of trees associated with derivations in Natural Deduction for \mil. The definition of $EOL$-tree facilitates the proof of the main result in \cite{Exponential}. With labelled trees, we can focus on the combinatorial aspects rather than the proof-theoretical. Unfortunately, in this article, we cannot do this. There are many important proof-theoretical details involved in this case. An abstraction of all of them would produce a very artificial concept.

To facilitate the presentation, we only handle normal proofs in expanded form.

\begin{definition} A normal proof/derivation is in expanded form, if and only if, all of its minimal formulas are atomic.
\end{definition}

We can consider, without loss of generality, that a formula in $\mil$ is a tautology if and only if there is a normal proof in expanded form that proves it. Of course, if it is tautology then it has a proof and so it has a normal proof by normalization. We use the following fact to obtain the expanded form from a normal proof.

\begin{proposition}\label{prop:NormalExpanded}
  Let $\Pi$ be a proof/derivation, in $\mil$, of $\alpha$ from $\Gamma=\{\gamma_1,\ldots,\gamma_k\}$. There is a proof in expanded form of $\alpha$ from $\Gamma$.
\end{proposition}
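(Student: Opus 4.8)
The plan is to prove Proposition~\ref{prop:NormalExpanded} by induction on the complexity of the minimal formulas occurring in the branches of a normal proof, transforming each non-atomic minimal formula into a (locally reducible) configuration of strictly simpler minimal formulas via an $\eta$-style expansion. First I would fix a normal proof $\Pi$ of $\alpha$ from $\Gamma$ and consider a branch of $\Pi$ whose minimal formula $\mu$ is non-atomic, say $\mu = \mu_1 \imply \mu_2$. By the branch analysis recalled in Section~\ref{terminology}, $\mu$ is the conclusion of the E-part of that branch and the first formula of its I-part; in particular it is the conclusion of an $\imply$-Elim application and either the premise of an $\imply$-Intro or the last formula of the branch (conclusion of $\Pi$ or minor premise of some $\imply$-Elim). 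The expansion step replaces the single occurrence $\mu = \mu_1\imply\mu_2$ by the derivation that assumes $[\mu_1]$, applies $\imply$-Elim to this new assumption together with $\mu_1\imply\mu_2$ to obtain $\mu_2$, and then immediately re-introduces $\mu_1$ by $\imply$-Intro (in the greedy version, discharging exactly that fresh $[\mu_1]$), recovering $\mu_1\imply\mu_2$. This leaves the rest of $\Pi$ untouched, preserves the conclusion and the open assumptions, and keeps the proof a proof (the new assumption is discharged locally).

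Next I would check that this operation has the intended effect on minimal formulas: after the expansion, the original branch is reorganised so that its minimal formula is now $\mu_2$ rather than $\mu_1\imply\mu_2$ — the newly created $\imply$-Elim node sits at the bottom of the E-part and $\mu_2$ becomes the conclusion of that E-part — and the newly created subderivation $[\mu_1] \Rightarrow \mu_1$ is itself a (one-formula) branch whose minimal formula is $\mu_1$. Both $\mu_1$ and $\mu_2$ are proper subformulas of $\mu$, hence strictly simpler, so a suitable induction measure (e.g. the multiset of sizes of minimal formulas of $\Pi$ under the multiset ordering, or simply $\sum$ over branches of the size of the minimal formula) strictly decreases. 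One must also verify that the transformation does not introduce a maximal formula, i.e. that normality is preserved: the new $\imply$-Intro produces $\mu_1\imply\mu_2$, which is then either the conclusion of the derivation, a minor premise of an $\imply$-Elim, or the premise of a further $\imply$-Intro — in none of these cases is it the major premise of an $\imply$-Elim, so no maximal formula is created. Iterating until no minimal formula is non-atomic yields a normal proof in expanded form of $\alpha$ from $\Gamma$, which is the claim; the first sentence of the statement (a tautology has a normal expanded proof) then follows from Normalization together with this proposition.

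The main obstacle I anticipate is bookkeeping rather than conceptual: one has to be careful that the expansion is applied uniformly to \emph{every} branch passing through a given minimal-formula occurrence (a single formula occurrence can be the minimal formula of several branches simultaneously, and the subderivation above $\mu$ may be shared), so the transformation is really performed on the occurrence in the proof tree, and one must confirm that the well-foundedness argument still goes through when several branches are shortened at once. A secondary subtlety is the interaction with the greedy $\imply$-Intro convention adopted in the paper: the freshly introduced $[\mu_1]$ must be the \emph{only} undischarged occurrence of $\mu_1$ visible to the new $\imply$-Intro, which holds because that assumption is created inside the expansion and does not escape it, so greediness discharges exactly it and nothing in the surrounding proof changes. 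Once these points are pinned down, the induction is routine.
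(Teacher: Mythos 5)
Your proposal is correct and follows essentially the same route as the paper's own proof in Appendix~\ref{AppendixA}: replace each non-atomic minimal formula $\varphi_1\imply\varphi_2$ by the $\eta$-style expansion that assumes $[\varphi_1]$, applies $\imply$-Elim and then $\imply$-Intro, and iterate until all minimal formulas are atomic. Your write-up is in fact more careful than the paper's (which is a two-line sketch), since you make the termination measure and the preservation of normality explicit.
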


The proof of the above proposition is in  appendix~\ref{AppendixA}.

The following lemma~\ref{E-parts} shows that their respective top-formula uniquely defines the E-parts of any branch in a normal proof in expanded form. It uses the fact that if $\gamma_1\imply\gamma_2$ is the major premise of an application of $\imply-E$ then the conclusion is the right-hand side of this premise.

\begin{lemma}\label{E-parts}
  Let $\Pi$ be a normal proof in expanded form. Its respective top-formula uniquely determines each E-part's branch in $\Pi$.
\end{lemma}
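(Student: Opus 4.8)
The plan is to prove that in a normal proof $\Pi$ in expanded form, once we fix a top-formula $\tau$ occurring at the start of some branch $\beta_1,\dots,\beta_k$, the entire E-part $\beta_1,\dots,\beta_m$ (up to the minimal formula) is forced. First I would recall the structure of an E-part: $\beta_1 = \tau$ is the top-formula, and for each $i < m$, $\beta_i$ is the major premise of an $\imply$-Elim whose conclusion is $\beta_{i+1}$. Since $\beta_i$ is a major premise of $\imply$-Elim, it must have the form $\gamma \imply \delta$, and by the remark cited just before the lemma, the conclusion $\beta_{i+1}$ is exactly the right-hand side $\delta$. So $\beta_{i+1}$ is determined as a \emph{formula} by $\beta_i$ alone: it is the consequent of $\beta_i$.

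The key step is then an induction on the position $i$ in the branch. The base case is $\beta_1 = \tau$, which is given. For the inductive step, suppose $\beta_i$ is determined. If $\beta_i$ is atomic, then it cannot be the major premise of an $\imply$-Elim (it has no consequent), nor — in a normal proof — the premise of an $\imply$-Intro that continues the E-part, so $\beta_i$ must be the minimal formula and the E-part ends here; this is consistent because in expanded form all minimal formulas are atomic. If $\beta_i$ is not atomic, say $\beta_i = \gamma\imply\delta$, then since we are still in the E-part, $\beta_i$ is the major premise of an $\imply$-Elim with conclusion $\beta_{i+1} = \delta$, which is uniquely determined. Thus the sequence of formula occurrences making up the E-part is pinned down: $\beta_{i+1}$ is always the consequent of $\beta_i$, and the E-part terminates precisely at the first atomic formula in this descending sequence of consequents, which is the minimal formula of the branch. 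I would also note that the minor premises attached along the E-part (the left premises of each $\imply$-Elim) are subformulas of the $\beta_i$'s and hence also determined in shape, though the lemma as stated only asks about the branch itself.

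The main obstacle is being careful about what "uniquely determines" means at the level of formula \emph{occurrences} versus formulas: the argument above shows the \emph{sequence of formulas} along the E-part is determined by $\tau$, and since each $\beta_{i+1}$ is literally the conclusion-occurrence of the $\imply$-Elim whose major-premise-occurrence is $\beta_i$, once the top-formula occurrence is fixed inside $\Pi$ the whole chain of occurrences is fixed by following the rule applications downward. The one genuinely proof-theoretic point to verify carefully is that within a branch of a normal proof the E-part really does behave this way — i.e. that no $\imply$-Intro can interrupt it before the minimal formula — but this is exactly the branch structure recalled in the text (no formula occurrence is simultaneously a major premise of $\imply$-Elim and conclusion of $\imply$-Intro, so the E-part consists solely of $\imply$-Elim major premises until the minimal formula). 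Assembling these observations gives the lemma; the write-up is short once the descending-consequent observation is made explicit.
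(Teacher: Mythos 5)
Your proposal is correct and follows essentially the same route as the paper: the paper's argument likewise observes that in the E-part each formula is the major premise of an $\imply$-Elim whose conclusion is its right-hand side, so the whole sequence is obtained from the top-formula by recursively taking consequents until the atomic minimal formula is reached, formalized as an induction on the degree of the top-formula (equivalent to your induction on position, since each step strictly decreases the degree). No gaps; your extra care about occurrences versus formulas is a reasonable clarification of what the paper leaves implicit.
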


\ifbool{formal}{
 \begin{proof} By induction on the degree of the top-formula\end{proof}}{
 \begin{proof}
  Every branch in $\Pi$ is of the form $\{\delta_{0},\ldots,q,\ldots,\delta_{k}\}$ where $\delta_{0}$ is a top-formula, and, $q=\delta_{j}$, an atomic formula,  is the minimal formula of the branch. By the definition of branch, definition~\ref{def:Branch} and the fact that $\Pi$ is a normal proof, for every $i=1,\ldots,j$, $\delta_{i-1}$ is major premise of an $\imply-E$ of a rule application having $\delta_{i}$ as conclusion. After $q=\delta_{j}$, the only possible rule applications are $\imply-I$, by the format of normal derivation's branches. So the sequence $\delta_{0},\ldots,\delta_{j}$ is maximal. Thus, from the top-formula $\delta_{0}$, we can obtain the whole sequence of formulas by picking up, recursively, the right-hand side of each of them. Finally, given the top-formula $\delta_{0}$, the whole sequence $\delta_{0},\ldots,\delta_{j}=q$ is determined. We can prove this by induction in the degree of the top-formula. 
 \end{proof}}

In the lemma~\ref{E-parts} above, each E-part's branch is uniquely induced from the top-formula. On the other direction, let $q$ be an atomic formula that is a minimal formula occurring in a branch $\Branch{b}$ in a normal proof in expanded form, $q$ does not determine the top-formula of $\vec{b}$ uniquely. For example, consider two branches:

\[
\{A\imply(B\imply (C\imply q)), (B\imply (C\imply q)),C\imply q, q,\ldots,\delta\}
\]
and
\[
\{B\imply (D\imply q),D\imply q,q,\ldots,\gamma\}
\}
\]
occurring in a normal proof $\Pi$. The minimal formula $q$ occurs in both. Given $q$, it is not possible to uniquely determine the top-formula of the branch to which it belongs to. However, if we observe with more attention, the $q$'s are not the same. The $q$of the first example is sub-formula of $A\imply(B\imply (C\imply q))$, while those in the second branch is sub-formula of $B\imply (D\imply q)$. If $\Pi$ is a normal proof of $\alpha$, the $q$'s are different occurrences in the syntax tree of $\alpha$. On the other hand, given a syntax tree $T_{\alpha}$ of $\alpha$ and an atomic formula $q$, we know that $q$ is a leaf in $T_{\alpha}$. There can be more than one leaf labelled with $q$, but given a specific leaf $q$ we can say from which top-formula it can be derived. The following lemma~\ref{lemma:Subformula} states this concerning any a normal proof $\Pi$
. Firstly we observe the following fact, which we state as a lemma without any proof.
\begin{lemma}
  Any formula in $\mil$ is of the form $(\alpha_0\imply (\alpha_1\imply \ldots (\alpha_k\imply q)\ldots)$, where
  $q$ is atomic.
\end{lemma}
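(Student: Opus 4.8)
The plan is to prove the simple structural fact that every formula of $\mil$ has the shape $(\alpha_0 \imply (\alpha_1 \imply \ldots (\alpha_k \imply q)\ldots))$ with $q$ atomic, by induction on the structure (equivalently, on the size) of the formula.

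First I would set up the induction on the syntax tree of an arbitrary $\mil$-formula $\phi$. For the base case, if $\phi$ is atomic, then $\phi = q$ with $q$ atomic, which is the desired form with $k = -1$ (the empty sequence of antecedents), or if one prefers to start the list at $k=0$, one simply records that the claim holds vacuously with no implications. For the inductive step, the only non-atomic formulas in $\mil$ are implications, so $\phi = \psi \imply \chi$ for some formulas $\psi, \chi$, each of strictly smaller size than $\phi$. By the induction hypothesis applied to $\chi$, we have $\chi = (\beta_1 \imply (\beta_2 \imply \ldots (\beta_k \imply q)\ldots))$ for some atomic $q$ and some (possibly empty) sequence $\beta_1,\ldots,\beta_k$. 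Then
\[
\phi = \psi \imply \chi = (\psi \imply (\beta_1 \imply (\beta_2 \imply \ldots (\beta_k \imply q)\ldots))),
\]
which is exactly of the claimed form, taking $\alpha_0 = \psi$ and $\alpha_{i} = \beta_{i}$ for $i = 1,\ldots,k$, with the same atomic head $q$. This closes the induction.

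There is essentially no main obstacle here: the statement is a routine unfolding of the grammar of implicational formulas, and the only thing to be careful about is the bookkeeping of the index $k$ and the degenerate (atomic) case — which is precisely why the excerpt states it "as a lemma without any proof." If one wanted to be fully explicit, one could alternatively phrase it non-inductively: reading $\phi$ along the rightmost branch of its syntax tree, each right-child step strips one leading antecedent, and since the syntax tree is finite this process terminates at a leaf, which must be atomic; the antecedents stripped off, in order, are $\alpha_0,\alpha_1,\ldots,\alpha_k$, and the leaf is $q$. Either formulation gives the result immediately.
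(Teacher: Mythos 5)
Your proof is correct: the structural induction (or equivalently, reading off the rightmost branch of the syntax tree) is the standard argument, and your bookkeeping of the atomic base case and the index $k$ is fine. The paper itself states this lemma explicitly without any proof, so there is nothing to compare against; your argument supplies exactly the routine justification the paper takes for granted.
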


Sometimes we use the notation $\left(\begin{array}{c}\alpha_0 \\ \ldots \\ \alpha_k\end{array}\right)\imply q$ to denote the  formula $(\alpha_0\imply (\alpha_1\imply \ldots (\alpha_k\imply q)\ldots)$ above.

As a consequence  of this lemma we have the following corollary~\ref{coro:Top-formulaq}

\begin{corollary}\label{coro:Top-formulaq}
  If $\Pi$ is a normal proof in expanded form and $q$ is the minimal formula of a branch $\Branch{b}$ then
  the top-formula of this branch is of the form $\left(\begin{array}{c}\alpha_0 \\ \ldots \\ \alpha_k\end{array}\right)\imply q$, for some $\alpha_i$, $i=1,k$.
\end{corollary}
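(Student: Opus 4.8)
The plan is to combine the form lemma for $\mil$-formulas stated just above with the structure of the E-part of a branch in a normal proof in expanded form, exactly in the spirit of the proof of Lemma~\ref{E-parts}. First I would invoke the preceding lemma to write the top-formula $\delta_0$ of the branch $\Branch{b}$ as $(\beta_0\imply(\beta_1\imply\ldots(\beta_m\imply p)\ldots))$ with $p$ atomic; the task is then to identify $p$ with $q$ and to check that the sequence $\beta_0,\ldots,\beta_m$ is precisely the one that is peeled off while descending the E-part of $\Branch{b}$.

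Next I would trace the E-part. Write $\Branch{b}$ as $\delta_0,\delta_1,\ldots,\delta_j=q$, with its E-part the initial segment ending in the minimal formula $q$. By Definition~\ref{def:Branch} together with normality, for each $i<j$ the formula $\delta_i$ is the major premise of a $\imply$-Elim whose conclusion is $\delta_{i+1}$; and since the conclusion of a $\imply$-Elim with major premise $\gamma_1\imply\gamma_2$ is its right-hand side $\gamma_2$, the formula $\delta_{i+1}$ is the consequent of $\delta_i$. Iterating from $\delta_0$, each $\delta_{i+1}$ is obtained from $\delta_i$ by stripping the outermost antecedent, so $\delta_i=(\beta_i\imply(\beta_{i+1}\imply\ldots(\beta_m\imply p)\ldots))$ for the $\beta$'s from the first step. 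The E-part stops exactly when the current formula is no longer a major premise of a $\imply$-Elim, i.e. when it has become atomic; since $\Pi$ is in expanded form its minimal formulas are atomic, and conversely an atomic formula cannot be a major premise of a $\imply$-Elim, so this happens precisely at $\delta_j=p$. Hence $j=m+1$ and $q=\delta_j=p$, giving $\delta_0=\left(\begin{array}{c}\beta_0\\ \ldots\\ \beta_m\end{array}\right)\imply q$, which is the asserted form with $k=m$ and $\alpha_i=\beta_i$.

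The one point that genuinely needs care — and that I expect to be the main obstacle — is the claim that the E-part terminates at exactly the first atomic formula encountered. A priori the minimal formula of a branch in a normal proof need only have the property that nothing above it in the branch is the conclusion of a $\imply$-Intro, not that it is atomic; it is the expanded-form hypothesis (via Proposition~\ref{prop:NormalExpanded}) that forces atomicity, and together with the trivial observation that an atomic formula has no consequent to be obtained by $\imply$-Elim, this pins down the stopping index. Once that is settled, the rest is just the bookkeeping of removing antecedents one at a time, and no induction beyond the one already carried out in Lemma~\ref{E-parts} is required.
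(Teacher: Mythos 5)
Your proposal is correct and follows essentially the same route as the paper, which presents this corollary as an immediate consequence of the preceding form lemma combined with the E-part structure already established in Lemma~\ref{E-parts} (each step of the E-part passes to the consequent of the major premise, and the expanded-form hypothesis forces the terminating minimal formula to be atomic, hence equal to the atomic core $q$ of the top-formula). Your explicit attention to why the E-part must peel off \emph{all} antecedents before stopping is a sound and slightly more careful writing-out of what the paper leaves implicit.
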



\begin{figure}
  \centering
 \centering
  \begin{tikzpicture}
    \node[anchor=south west] (proof) at (0,0) {%
        \AxiomC{$[D]$}
        \AxiomC{$C$}
        \AxiomC{$B$}
        \AxiomC{$[A]$}
        \AxiomC{$[A\imply (B\imply (C\imply q))]$}
        \BinaryInfC{$B\imply (C\imply q)$}
        \BinaryInfC{$C\imply q$}
        \BinaryInfC{$q$}
        \UnaryInfC{$A\imply q$}
        \AxiomC{$[(A\imply q)\imply (D\imply q)]$}
        \BinaryInfC{$D\imply q$}
        \BinaryInfC{$q$}
        \UnaryInfC{$D\imply q$}
        \UnaryInfC{$((A\imply q)\imply (D\imply q))\imply (D\imply q)$}
        \UnaryInfC{$(A\imply (B\imply (C\imply q)))\imply (((A\imply q)\imply (D\imply q))\imply (D\imply q))$}
    \DisplayProof};
  \begin{scope}[x={(proof.south east)},y={(proof.north west)}]
    \node (ANCHOR) at (.38,.36) {};
    \node (ANCHOR1) at (.32,.64) {};
  \end{scope}
  \begin{scope}[sibling distance=15em,
      every node/.style={shape=rectangle, rounded corners,
        draw, align=center,top color=white, bottom color=gray!20}]
  \node[below=of proof,xshift=-5em] {$\alpha$}
  child { [sibling distance=5em] node {$A\imply (B\imply (C\imply q))$}
    child { node {$A$} }
    child { [sibling distance=4em] node {$(B\imply (C\imply q))$}
      child {  node {$B$} }
      child { [sibling distance=2em] node {$C\imply q$}
        child { node {$C$} }
        child { node (b) {$q$} }
            }
          }
        }
    child { [sibling distance=7em] node {$((A\imply q)\imply (D\imply q))\imply (D\imply q)$}
      child { [sibling distance=6em] node {$(A\imply q)\imply (D\imply q)$}
        child { [sibling distance=2em] node {$A\imply q$}
          child { node {$A$} }
          child { node {$q$} }
              }
        child { [sibling distance=2em] node {$D\imply q$}
          child {  node {$D$} }
          child {  node (c) {$q$} }
              }
            }
      child { [sibling distance=2em]  node {$D\imply q$}
          child {  node {$D$} }
          child {  node {$q$} }
    } };
    \path (0,0) rectangle (1,-2);
    \draw[->,shorten >=1pt,>=Stealth,dashed] (ANCHOR)
    to [out=0, in=45,out looseness=.5, out min distance=20em] (c);
    \draw[->,shorten >=1pt,>=Stealth,dashed] (ANCHOR1)
    to [out=0, in=45,out looseness=.5, out min distance=-20em] (b);
  \end{scope}
  \end{tikzpicture}
  \caption{A mapped N.D. proof}
  \label{figure:MappedProofs}
\end{figure}

\begin{lemma}\label{lemma:Subformula}
  Let $\Pi$ be a normal proof of $\alpha$ in expanded form. Let $v$ be a leaf in $T_{\alpha}$, labeled with an atomic formula $q$. If $q$ is the minimal formula of a branch $\Branch{b}$ in $\Pi$, then there is at most one vertex $u$ in $T_{\alpha}$ that it is right-ancestral of $v$ and left child of a node $u$ labelled with the top-formula of $\Branch{b}$.
\end{lemma}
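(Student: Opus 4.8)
\medskip
\noindent\textbf{Proof plan.} The plan is to forget about $\Pi$ almost entirely and argue directly inside the syntax tree $T_{\alpha}$, by examining the unique path from the root down to the leaf $v$. Write this path as $p_0,p_1,\ldots,p_m$, with $p_0$ the root and $p_m=v$, and call the edge $p_ip_{i+1}$ a \emph{right edge} or a \emph{left edge} according to whether $p_{i+1}$ is the right or the left child of $p_i$. First I would note that the last edge $p_{m-1}p_m$ is necessarily a right edge: since $v$ carries an atomic label $q$, its parent $p_{m-1}$ is labelled by some formula $\gamma\imply q$ and $v$ is its right child (if $v$ is itself the root, i.e.\ $\alpha=q$, there are no ancestors at all and the statement is vacuous). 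Then, by Corollary~\ref{coro:Top-formulaq}, the top-formula of $\Branch{b}$ has the shape $(\alpha_0\imply\cdots\imply\alpha_k\imply q)$, so every node labelled by it has its left child labelled $\alpha_0$; consequently the candidate vertices for $u$ are exactly those nodes, labelled $\alpha_0$, that occur as left children of top-formula-labelled nodes and that lie on a right-only path down to $v$.

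The first key step is to unfold the recursive definition of ``right-ancestral'' into a closed form. Doing so shows that $u$ is a right-ancestral of $v$ precisely when $u=p_i$ for some $i<m$ and each of the edges $p_ip_{i+1},p_{i+1}p_{i+2},\ldots,p_{m-1}p_m$ is a right edge. Hence, letting $j$ be the largest index $<m$ for which $p_jp_{j+1}$ is a left edge (and $j=-1$ if the root-to-$v$ path contains no left edge), the right-ancestrals of $v$ form exactly the contiguous segment $p_{j+1},\ldots,p_{m-1}$ of the path: a block whose topmost vertex $p_{j+1}$ is reached from its parent by the last left edge, while every later vertex $p_{j+2},\ldots,p_{m-1}$ is reached from its parent by a right edge.

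The second step is the observation that drives the lemma: a vertex can be the left child of some node only when the edge from its (unique) parent down to it is a left edge. Among the right-ancestrals $p_{j+1},\ldots,p_{m-1}$ of $v$, only $p_{j+1}$ can have this property, because for $i>j+1$ the edge $p_{i-1}p_i$ is a right edge by maximality of $j$. Therefore any $u$ that is simultaneously a right-ancestral of $v$ and the left child of a node labelled by the top-formula of $\Branch{b}$ must be $p_{j+1}$, and the node in question must be its unique parent $p_j$ --- which already yields ``at most one''. When $j=-1$, i.e.\ $v$ is the rightmost leaf of $T_{\alpha}$, every right-ancestral of $v$ is the root or a right child, so there is no such $u$ and the bound holds vacuously. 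To close, I would simply remark that $p_{j+1}$ genuinely has both properties exactly when $p_j$ happens to be labelled by the top-formula, which by Corollary~\ref{coro:Top-formulaq} has $q$ as its final atom and forces $p_{j+1}$ to be labelled $\alpha_0$ --- consistent with $p_{j+1}$ being a right-ancestral of the $q$-leaf $v$.

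I do not anticipate a genuine obstacle; the only point demanding care is the clean unfolding of the inductive definition of right-ancestral into the ``contiguous segment of the root-to-$v$ path'' description, together with an explicit treatment of the degenerate cases ($\alpha$ atomic, or $v$ the rightmost leaf of $T_{\alpha}$) so that the phrase ``at most one'' is literally correct even when the count is zero. Everything else follows at once from the fact that each vertex has a unique parent and from the maximality choice of $j$.
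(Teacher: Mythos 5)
Your argument is correct, and it reaches the ``at most one'' bound by a genuinely different --- and, on the uniqueness side, more explicit --- route than the paper. The paper's proof stays inside the derivation $\Pi$: it writes out the E-part of $\Branch{b}$ as the sequence obtained from the top-formula $(\alpha_0\imply(\cdots(\alpha_k\imply q)\cdots))$ by repeatedly passing to the consequent, observes that the corresponding vertices form a right-edge chain in $T_{\alpha}$ ending at $v$, and notes that the vertex carrying the top-formula must sit in antecedent position, hence be a left child; the uniqueness claim itself is left essentially implicit. You instead work entirely in $T_{\alpha}$: you unfold the recursive definition of right-ancestral into ``the suffix of the root-to-$v$ path strictly below the last left edge,'' and observe that in this suffix only the topmost vertex $p_{j+1}$ can be a left child of anything, since every other vertex of the suffix is, by maximality of $j$, the right child of its unique parent. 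That is exactly the combinatorial fact the lemma needs (and the fact that makes Definition~\ref{def:E-mapped-ND} well defined), and your explicit handling of the degenerate cases is a welcome addition. Two caveats, neither fatal. First, your preliminary remark that the last edge $p_{m-1}p_m$ must be a right edge is not forced by the hypotheses: a leaf labelled $q$ may be the left child of a node labelled $q\imply\delta$, in which case $j=m-1$, the set of right-ancestrals is empty, and the bound holds with count zero --- so nothing breaks, but the remark should be dropped or weakened. Second, the lemma's statement is garbled (the letter $u$ plays two roles), and you resolved it as ``$u$ is the left child of a node labelled with the top-formula,'' whereas the paper's own proof and Definition~\ref{def:E-mapped-ND} indicate the intended reading is that $u$ itself is labelled with the top-formula and is a left child of \emph{some} vertex. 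Your core argument yields ``at most one'' under either reading, since both require $u$ to be a right-ancestral of $v$ that is a left child of something; only your closing consistency remark, which places the top-formula on $p_j$ rather than on $p_{j+1}$, would need adjusting to match the way the lemma is used later in the paper.
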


 \begin{proof} of lemma~\ref{lemma:Subformula}
  If $q$ is the minimal formula of $\Branch{b}$ then the top-formula of $\Branch{b}$ is of the form $\left(\begin{array}{c}\alpha_0 \\ \ldots \\ \alpha_k\end{array}\right)\imply q$ and the E-part of $\Branch{b}$ is of as follows:
  \[
  \left\{\left(\begin{array}{c}\alpha_0 \\ \ldots \\ \alpha_k\end{array}\right)\imply q,
  \left(\begin{array}{c}\alpha_1 \\ \ldots \\ \alpha_k\end{array}\right)\imply q,
  \left(\begin{array}{c}\alpha_2 \\ \ldots \\ \alpha_k\end{array}\right)\imply q,\ldots,
  \alpha_k \imply q, q\right\}
  \]
  By the form of this E-part's branch, we can see that all of its formulas, except the last one, are right-ancestral of $u$ in $T_{\alpha}$. Moreover $u$, labeled with $\left(\begin{array}{c}\alpha_0 \\ \ldots \\ \alpha_k\end{array}\right)\imply q$ has to be at the antecedent of a (sub) formula in $\alpha$, so it is a left-child of some vertex in $T_{\alpha}$. 
\end{proof}

 Using the above lemma~\ref{lemma:Subformula}, we can map each minimal formula, in a normal and expanded proof $\Pi$, employing a one-to-one correspondence to the respective top-formula occurrence of its branch. It is enough to use the vertexes of $T_{\alpha}$ for labelling the nodes of the proof-tree. Figure~\ref{figure:MappedProofs} illustrates the  necessity of a mapping from the proof into the syntax-tree of the proved formula. Note that the two positions of the atomic formula $q$ in the syntax tree uniquely indicates the top-formula in the E-part of the Natural Deduction proof/derivation to which it belongs. We can consider that the two $q$'s are in fact differents. The top-formula of each $q$ is the biggest one in the inverse path (upwards) following the reverse of the right child edge.  
 Definition~\ref{def:E-mapped-ND}, in the sequel, has this purpose. With the sake of a more precise presentation, we provide below the definition of a syntax tree of a formula.

\begin{definition}[Syntax tree of a formula]
  Let $\alpha$ be a \mil-formula. The syntax tree of $\alpha$ is the triple $\langle V,E_{left},E_{right},L\rangle$ where $V$ is a set, of vertexes, $E_{s}\subseteq V\times V$, $s=left,right$, the corresponding left and right edges, such that $\langle V,E_{left},E_{right}\rangle$ is an ordered full binary tree, and, $L$ is a bijective function from $V$ onto the subformulas of $\alpha$, such that:
  \begin{itemize}
  \item $L(r)=\alpha$, where $r\in V$ is the root of the tree $\langle V,E_{left},E_{right}\rangle$, and;
  \item For every formula $\varphi_1\imply\varphi_2\in Sub(\alpha)$, if $L(v)=\varphi_1\imply\varphi_2$,  $\langle v,v_1\rangle\in E_{left}$ and $\langle v,v_2\rangle\in E_{right}$ then $L(v_1)=\varphi_1$ and $L(v_2)=\varphi_2$.
  \end{itemize}
  \end{definition}

\begin{definition}[Partially mapped ND-proofs]\label{def:Mapped-ND}
  Let $\alpha$ be a \mil-formula and $T_{\alpha}=\langle V,E_{left},E_{right},L\rangle$ its syntax tree. Let $\Pi$ be a \mil-ND normal derivation of $\alpha$. A partially mapped ND-proof of $alpha$ is a structure $\langle \Pi,T_{\alpha},l\rangle$, where $l$ is a partial function from the formula occurrences in $\Pi$ to $V$, such that, the following conditions hold.
  \begin{itemize}
  \item If $\gamma$ is the minimal formula of a branch $\Branch{b}$ in $\Pi$ then if $l(\gamma)$ is defined then $L(l(\gamma))=\gamma$;
    \item If $\gamma$ is the minimal formula of a branch $\Branch{b}=\langle b_0,\ldots,b_j=\gamma,\ldots,b_k\rangle$ and $l(\gamma)$ is defined then either $l(b_{j-1})$ or $l(b_{j+1})$ are defined, and;
    \item If $\varphi_2$ is the conclusion of a $\imply$-Elim rule in $\Pi$, that has premises $\varphi_1\imply\varphi_2$ and $\varphi_1$,  and $l(\varphi_2)=v_2$ then there are $v$ and $v_1$, such that $\langle v, v_2\rangle\in E_{right}$, $\langle v,v_1\rangle\in E_{left}$, $l(\varphi_1)=v_1$ and $l(\varphi_1\imply\varphi_2)=v$;
          \item If $\varphi_1\imply\varphi_2$ is the conclusion of a $\imply$-Intro rule in $\Pi$, that has premise $\varphi_2$ and $l(\varphi_1\imply\varphi_2)=v$ then there is $v^{\prime}\in V$, $\langle v,v^{\prime}\rangle\in E_{right}$ and $l(v^{\prime})=\varphi_2$. 
      \end{itemize}
  \end{definition}

\begin{definition}[E-mapped Natural Deduction Normal Expanded proofs]\label{def:E-mapped-ND}
  Let $\alpha$ be a \mil-formula, $T_{\alpha}=\langle V,E_{left},E_{right},L\rangle$ be the syntax tree of $\alpha$ and $\Pi$ a normal and expanded proof of $\alpha$. The triple $\langle \Pi,T_{\alpha},l\rangle$ is an E-mapped Natural Deduction proof, if and only if, $l$ is defined on all formula occurrences that take part in the E-parts of branches in $\Pi$, including the minimal formulas. Moreover the following condition must hold:
  \begin{itemize}
  \item For every branch $\Branch{b}$, if $q$ is the minimal formula of $\Branch{b}$, $l(q)=v\in V$ and $\beta$ the top-formula (occurrence) of $\Branch{b}$ then $l(\beta)=u$, where $u$ is the right-ancestral of $v$ that is left-child of some $w\in V$, as stated by lemma~\ref{lemma:Subformula}.
  \end{itemize}
  \end{definition}

Since lemma~\ref{lemma:Subformula} holds then the above definition is well-defined. Moreover, we have the following proposition. We use the acronym {\bf EmND} to refer to an E-mapped Natural Deduction Normal Expanded proof. In the following, we consider a branch as a sequence of formula occurrences numbered from top-formula down to the last formula of the branch. 

\begin{proposition}\label{prop:one2one}
  Let $\langle \Pi,T_{\alpha},l\rangle$ be a {\bf EmND} of $\alpha$. We have that to each branch $\Branch{b}=\langle \beta_0,\ldots,\beta_k,$ in $\Pi$ that has the  minimal formula occurrence $q=\beta_{j}$, such that $l(\beta_{j})=u\in V$,  there exists one and only one path $p=\langle u_0,\ldots,u_j\rangle$ in $T_{\alpha}$, with $u_j=u$, and $u_0$ as stated in lemma~\ref{lemma:Subformula}, such that, $l(\beta_{i})=u_i$, $i=0,\ldots,j$. 
\end{proposition}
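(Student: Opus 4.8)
The plan is to take for $p$ nothing other than the sequence of $l$-images of the occurrences making up the E-part of $\Branch{b}$ (together with its minimal formula), and then to check two things: that this sequence is genuinely a path in $T_{\alpha}$ with the two prescribed endpoints, and that it is the only such path — the latter being immediate because $l$ is single-valued. First I would use Lemma~\ref{E-parts} and Corollary~\ref{coro:Top-formulaq} to pin down the shape of the E-part: writing $\Branch{b}=\langle\beta_0,\ldots,\beta_k\rangle$ with minimal formula $\beta_j=q$, the occurrence $\beta_0$ is the top-formula, it has the form $(\alpha_0\imply(\alpha_1\imply\ldots(\alpha_m\imply q)\ldots))$, and for each $i$ with $1\le i\le j$ the occurrence $\beta_{i-1}$ is the major premise of an $\imply$-Elim application whose conclusion is $\beta_i$. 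Since $\langle\Pi,T_{\alpha},l\rangle$ is an \textbf{EmND}, $l$ is defined on every one of $\beta_0,\ldots,\beta_j$.

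Next I would build $p$ by a downward induction on $i$ from $j$ to $0$. Put $u_j:=l(\beta_j)=l(q)=u$. Given that $u_i:=l(\beta_i)$ is already known, consider the $\imply$-Elim application with conclusion $\beta_i$ and major premise $\beta_{i-1}=\varphi_1\imply\varphi_2$, so $\varphi_2=\beta_i$. The $\imply$-Elim clause of Definition~\ref{def:Mapped-ND}, applied with $l(\varphi_2)=u_i$, yields a vertex $v$ with $\langle v,u_i\rangle\in E_{right}$ and $l(\beta_{i-1})=v$; since $T_{\alpha}$ is a tree, this $v$ is the \emph{unique} vertex of which $u_i$ is the right-child, so setting $u_{i-1}:=v=l(\beta_{i-1})$ is forced. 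Iterating for $i=j,\ldots,1$ produces $u_0,\ldots,u_j$ with $\langle u_{i-1},u_i\rangle\in E_{right}$ throughout; this is a descending path in $T_{\alpha}$ (its vertices automatically distinct, being on a root-to-leaf chain), and $l(\beta_i)=u_i$ for all $i$ by construction. For the endpoints, $u_j=u$ holds by definition, and since $\beta_0$ is the top-formula of $\Branch{b}$ and $l(q)=u_j$, the defining clause of Definition~\ref{def:E-mapped-ND} forces $l(\beta_0)$ to be the right-ancestral of $l(q)$ that is a left-child of some vertex — exactly the vertex singled out by Lemma~\ref{lemma:Subformula}; as $l(\beta_0)=u_0$, this is the $u_0$ demanded by the statement. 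That the two descriptions of $u_0$ — ``the top of the right-edge chain above $u$'' and ``the Lemma~\ref{lemma:Subformula} vertex'' — coincide is automatic, being precisely the well-definedness of the \textbf{EmND} mapping remarked on right after Definition~\ref{def:E-mapped-ND}.

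For uniqueness, suppose $p'=\langle u'_0,\ldots,u'_j\rangle$ is any path in $T_{\alpha}$ with $u'_j=u$ and $l(\beta_i)=u'_i$ for all $i$; since $l$ is a function, $u'_i=l(\beta_i)=u_i$ for each $i$, hence $p'=p$. I do not expect a genuine obstacle: the argument is essentially bookkeeping. The only points that need care are the index alignment between the E-part as described by Lemma~\ref{E-parts} and the $\imply$-Elim clause of Definition~\ref{def:Mapped-ND}, and the observation that iterating that clause up the E-part lands on exactly the same top-formula vertex that the \textbf{EmND} clause prescribes — i.e.\ the well-definedness of \textbf{EmND}. The one genuinely load-bearing external input is Lemma~\ref{lemma:Subformula}, which is what guarantees that the target vertex $u_0$ exists and is unique in the first place.
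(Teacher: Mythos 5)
Your proposal is correct and follows essentially the same route as the paper: the paper's proof declares the proposition an immediate consequence of Corollary~\ref{coro:Top-formulaq} and Lemma~\ref{lemma:Subformula}, identifying $p$ with the right-edge chain in $T_{\alpha}$ from the top-formula vertex (the right-ancestral of $u$ that is a left-child) down to $u$, exactly as you do. Your version merely spells out the bookkeeping — building the chain via the $\imply$-Elim clause of Definition~\ref{def:Mapped-ND} and getting uniqueness from $l$ being single-valued — which the paper leaves implicit.
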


In the above proposition we can also see that $\langle \beta_0,\ldots,\beta_j=q\rangle$ is the E-part of $\Branch{b}$.

\ifbool{formal}{
 \begin{proof} This a immediate consequence of corollary~\ref{coro:Top-formulaq} and lemma~\ref{lemma:Subformula} \end{proof}}
{\begin{proof}
    From corollary~\ref{coro:Top-formulaq} applied to vertex $u$, such that $l(u)=q$, we have a unique top-formula that labels the vertex $v$ of $T_{\alpha}$. Moreover,  $v$ is the first right-ancestral of $u$ that is left-child of some vertex in $T_{\alpha}$, hence, the sequence $\{v=u_0,u_1,\ldots,u_k,u\}$ is the $p$, the path of vertexes such that $\{l(u_0),l(u_1),\ldots,l(u_k),l(u)=q\}$ is the whole E-part of the branch from which the occurrence of $q$ is the minimal formula and $l(v)=l(u_0)$ is top-formula. 
\end{proof}}

The importance of proposition~\ref{prop:one2one} is that it states that any given E-part $\langle \beta_0,\ldots,\beta_j\rangle$ of a branch in an {\bf EmND} is an instance of at most one path $p=\langle u_0,\ldots,u_j\rangle$ in $T_{\alpha}$, such that $L(u_i)=\beta_i$, $i=0,\ldots,j$. Moreover, this path $p$ is as stated in lemma~\ref{lemma:Subformula}.  Given a {\bf EmND} $\Pi$,
for  each  E-part,  in an {\bf EmND},  exists a path of the form of lemma~\ref{lemma:Subformula}, in the syntax tree of the conclusion of the{\bf EmND}. The number of such paths in the syntax tree is upper-bounded by its size, then the number of different E-parts types in any {\bf EmND} is at most of the size of the conclusion of this {\bf EmND}. We have the following lemma:

\begin{lemma}[Linear upper-bound on types of E-parts]\label{lemma:Upper-boundsE-parts}
  Let $\Pi$ be an {\bf EmND} proving the \mil-formula $\alpha$. The number of different types of E-parts occurring in this {\bf EmND} is at most the size of the $T_{\alpha}$.
\end{lemma}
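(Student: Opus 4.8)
The plan is to establish the claim by leveraging Proposition~\ref{prop:one2one} as the crucial bridge. First I would recall what that proposition gives us: for every branch $\Branch{b}$ in the {\bf EmND} $\Pi$, with minimal formula $q = \beta_j$ and $l(\beta_j) = u$, the E-part $\langle \beta_0, \ldots, \beta_j \rangle$ is mapped by $l$ to a unique path $p = \langle u_0, \ldots, u_j \rangle$ in $T_\alpha$, where $u_0$ is the designated right-ancestral vertex of $u$ that is a left-child (as in Lemma~\ref{lemma:Subformula}). In other words, $l$ restricted to the E-parts sends each E-part to a well-defined path in the syntax tree $T_\alpha$ that starts at a left-child vertex and descends via right-edges to a leaf.

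Next I would make precise the notion of "type" of an E-part: two E-parts have the same type exactly when they are mapped (via $l$, read off as the sequence of labels $L(u_i)$) to the same path in $T_\alpha$. By Proposition~\ref{prop:one2one}, each E-part determines such a path uniquely, so "type of E-part" is well-defined and the assignment E-part $\mapsto$ path in $T_\alpha$ is a well-defined function from the set of E-part occurrences in $\Pi$ to the set of such paths in $T_\alpha$. Hence the number of \emph{distinct} types is bounded above by the number of distinct paths of the relevant form in $T_\alpha$.

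Then I would bound the number of such paths. Each path of the form in Lemma~\ref{lemma:Subformula} is completely determined by its endpoint: it is the maximal chain of right-edges from a leaf $v$ upward to the first ancestor $u_0$ that happens to be a left-child (or the root). Thus distinct such paths have distinct leaf-endpoints (and, more crudely, the map path $\mapsto u_0$ is injective as well, so one could also count starting vertices). In either case the number of such paths is at most the number of vertices of $T_\alpha$, which is exactly $\size{T_\alpha}$ — equivalently, linear in the size of $\alpha$ since $T_\alpha$ has one vertex per subformula occurrence. Combining this with the previous paragraph yields that the number of distinct E-part types in $\Pi$ is at most $\size{T_\alpha}$, as claimed.

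I do not expect a serious obstacle here: the content is essentially packaging Proposition~\ref{prop:one2one} together with the observation that the special paths of Lemma~\ref{lemma:Subformula} are in injective correspondence with (a subset of) the vertices of $T_\alpha$. The one point deserving care is to state the counting argument in terms of \emph{paths} rather than merely \emph{leaves}, since in principle a subtlety could arise if two different E-parts of different lengths mapped to nested paths with the same leaf; but Lemma~\ref{lemma:Subformula} already resolves this, because the top-formula (hence $u_0$, hence the whole path) is uniquely determined by the leaf $v$ together with the ``first left-child ancestor'' condition. So the argument is a short deduction from the machinery already developed, and the main work is just citing the right earlier results in the right order.
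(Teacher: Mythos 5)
Your proposal is correct and follows essentially the same route as the paper, which justifies the lemma in the paragraph preceding its statement by invoking Proposition~\ref{prop:one2one} to associate each E-part with a unique path of the form given in Lemma~\ref{lemma:Subformula} and then bounding the number of such paths by the size of $T_{\alpha}$. Your additional observation that these paths are determined by their leaf endpoints only makes explicit a counting step the paper leaves implicit.
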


This lemma is very useful in the next section. Moreover, we remark that if we label the nodes of a Natural Deduction proof-tree with the nodes (not the labels) of the syntax tree of the conclusion of the proof-tree, we will have the same effect on counting different types of E-parts that  lemma~\ref{lemma:Upper-boundsE-parts} above reports. 
  
\section{Counting repeated patterns in polynomially lower-bounded proofs}\label{sec:Lower-bounds}

In \cite{Studia2019}, theorem 4, page 57,  we show that we can prove any \mil tautology using a normal proof that has height upper-bounded by a linear function on the size of the formula. If we consider normal proofs in expanded form, the upper-bounded on the height of the proof is still linear on the size of the conclusion too. This last statement is proved in appendix, proposition~\ref{appendix:linear-upper-bound-on-expanded}. This section proves some auxiliary lemmas that are useful to prove the main results in the next section~\ref{sec:Main}.

Because of the linear speedup theorem, see \cite{LinSpeedUp-Space} page 63-64, Theorem 3.10, w.l.g., we consider that a linear height bounded proof of $\alpha$ is a proof which height is upper-bounded by the length of $\alpha$. In fact, in this article, because of counting details we consider that the upper-bounded is the size of the syntax tree $\|T_{\alpha}\|$. Since $\|T_{\alpha}\|=\|\alpha\|$, the definition is equivalent. 

  \begin{lemma}[Spreding Branchs Repetitions]\label{Zero} Let $\langle \Pi,T_{\alpha},l\rangle$ be a linearly height bounded {\bf EmND} proof of $\alpha$, $0<p\in \mathbb{N}$ and $m=\|\alpha\|$. If there is a branch $\Branch{b}$ that has more than $m^p$ instances occurring in $\Pi$ then there is a level $\mu$, such that, at least $m^{p-1}$ instances of $\Branch{b}$ have the minimal formula $q_{\Branch{b}}$ of $\Branch{b}$ occurring in level $\mu$.
    \end{lemma}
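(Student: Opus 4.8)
The plan is to use a pigeonhole argument over the levels of the proof-tree. First I would fix the branch $\Branch{b}$ and let $q_{\Branch{b}}$ denote its minimal formula. Each of the more than $m^p$ instances of $\Branch{b}$ occurring in $\Pi$ carries, in particular, an occurrence of $q_{\Branch{b}}$ sitting at the position where the E-part of that instance ends and the I-part begins; so it suffices to show that many of these minimal-formula occurrences sit at a common level. Here ``level'' means depth in the proof-tree $\Pi$, measured from the root (conclusion) downward — or equivalently the height measured from the leaves, any fixed convention will do, so long as it is consistent. The key numerical input is the hypothesis that $\langle\Pi,T_\alpha,l\rangle$ is linearly height bounded: by the convention adopted just before the lemma, the height of $\Pi$ is at most $\|T_\alpha\| = \|\alpha\| = m$. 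Hence there are at most $m$ distinct levels at which any formula occurrence of $\Pi$ — in particular any occurrence of $q_{\Branch{b}}$ — can be located.

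Now I would apply the pigeonhole principle. We have a set $S$ of more than $m^p$ instances of $\Branch{b}$; map each instance to the level at which its copy of $q_{\Branch{b}}$ occurs. This is a function from a set of size $> m^p$ into a set of at most $m$ levels. Therefore some level $\mu$ is hit by at least $\lceil m^p/m\rceil = m^{p-1}$ of the instances (one should note $p\ge 1$ is needed for this to be a nontrivial statement; if $p=1$ the conclusion reads ``at least $m^0=1$ instance at some level,'' which is immediate, and the interesting case is $p\ge 2$, but the argument is uniform). Thus at least $m^{p-1}$ instances of $\Branch{b}$ have their minimal formula $q_{\Branch{b}}$ occurring at level $\mu$, which is exactly the claim.

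The only genuine subtlety — and the step I would be most careful about — is making sure the notion of ``instance of a branch'' is pinned down so that distinct instances really do contribute distinct occurrences of $q_{\Branch{b}}$, and that ``the level of the minimal formula of an instance'' is well defined. This is where the {\bf EmND} structure and Proposition~\ref{prop:one2one} do the work: an instance of $\Branch{b}$ is determined by its E-part, which by Proposition~\ref{prop:one2one} is an instance of a unique path in $T_\alpha$ ending at the vertex $l(q_{\Branch{b}})$, so the minimal-formula occurrence attached to each instance is unambiguous, and counting instances is the same as counting these occurrences. Once that bookkeeping is in place the argument is a one-line pigeonhole; no calculation beyond dividing $m^p$ by $m$ is required. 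I would also remark that the same reasoning would go through with the height bound replaced by any bound $h(m)$, yielding $m^p/h(m)$ instances at a common level; the linear bound is what makes the loss a single factor of $m$, which is precisely what lets the exponent drop from $p$ to $p-1$.
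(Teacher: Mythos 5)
Your proof is correct and is essentially the paper's own argument: a pigeonhole over the at most $m$ levels guaranteed by the linear height bound. The only cosmetic difference is that you pigeonhole directly on the level of the minimal formula, whereas the paper pigeonholes on the level of each branch instance's last formula and then shifts by the fixed length $k$ of the I-part to locate the minimal formula at level $\mu = n-k$; the two are interchangeable.
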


  \begin{lemma}[Branchs and sub-derivations]\label{Um} Let $\Pi$ be a proof of $\alpha$, and $\Branch{b}$ a branch in $\Pi$ under the same conditions of the lemma~\ref{Zero} above. Then there is a (sub) derivation $\Pi_{\Branch{b}}$ of $\Pi$, such that, $\Pi_{\Branch{b}}$ has at least $m^{p-1}$ instances occurring in $\Pi$.
  \end{lemma}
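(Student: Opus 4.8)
The plan is to produce $\Pi_{\Branch{b}}$ as the sub-derivation of $\Pi$ whose conclusion is an occurrence of the minimal formula $q_{\Branch{b}}$ of $\Branch{b}$, and then to show that sufficiently many instances of $\Branch{b}$ anchor one and the same such sub-derivation. First I would apply Lemma~\ref{Zero}: from the hypothesis that $\Branch{b}$ has more than $m^p$ instances in $\Pi$ we obtain a level $\mu$ and instances $\Branch{b}_1,\dots,\Branch{b}_N$ of $\Branch{b}$, with $N\ge m^{p-1}$, whose minimal formula $q_{\Branch{b}}$ occurs at level $\mu$. For each $i$ let $\Pi_i$ be the sub-derivation of $\Pi$ with conclusion the $i$-th of these occurrences of $q_{\Branch{b}}$; since a sub-derivation of a normal proof in expanded form is again normal and in expanded form, each $\Pi_i$ is a normal expanded derivation of the atomic formula $q_{\Branch{b}}$. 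Moreover the $N$ occurrences are pairwise distinct as subtrees of $\Pi$: their roots all lie at the single level $\mu$, hence in pairwise incomparable positions of the proof tree, so no $\Pi_i$ contains another.

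The heart of the argument is to show that all the $\Pi_i$ are literally the same derivation, which I would then name $\Pi_{\Branch{b}}$. The principal branch of $\Pi_i$ is precisely the E-part of $\Branch{b}_i$; because the $\Branch{b}_i$ are all instances of the one branch $\Branch{b}$ they share a common top-formula, so by Lemma~\ref{E-parts} (together with Corollary~\ref{coro:Top-formulaq}) this E-part is, for every $i$, the same sequence $\beta_0,\dots,\beta_k,q_{\Branch{b}}$ of $\imply$-Elim applications, carrying the same minor-premise formulas $\alpha_0,\dots,\alpha_k$ at the same positions; and since $q_{\Branch{b}}$ has been pinned to the fixed level $\mu$, every node of this E-part --- in particular the conclusion of each minor-premise sub-derivation --- sits at a level completely determined by $\mu$ and the fixed shape of the E-part. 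It then remains to identify the sub-derivations feeding the minor premises. Each $\alpha_t$ is, in expanded form, the conclusion of a normal expanded sub-derivation whose own principal branch is again an E-part, whose conclusion sits at a prescribed level, and whose height is bounded by $m$ minus that level; so the same analysis applies to it recursively, the height budget strictly decreasing at every descent while, by the sub-formula principle (Corollary~\ref{coro:SubForProperty}) and the linear bound on the number of distinct E-part types (Lemma~\ref{lemma:Upper-boundsE-parts}), only boundedly many shapes remain available at each layer. Since $\Pi$ is linearly height bounded this descent is well-founded, and carrying it out forces all the $\Pi_i$ to coincide; hence $\Pi_{\Branch{b}}:=\Pi_1$ has at least $N\ge m^{p-1}$ instances in $\Pi$, which is the statement.

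The step I expect to be the real obstacle is exactly the identification of the minor-premise sub-derivations in the second paragraph: ``same branch'' gives the shared E-part skeleton for free, but not the \emph{closed} sub-derivations hanging from its minor premises, and it is precisely here that the two hypotheses inherited through Lemma~\ref{Zero} --- the common level $\mu$ and the linear bound on the height of $\Pi$ --- have to do the work, the former pinning the relevant sub-derivations to fixed levels and the latter making the descent through successive layers of minor premises terminate; the delicate part is doing this without spending a further factor of $m$, which would break the claimed $m^{p-1}$ bound. A secondary point to watch is the bookkeeping of ``instances'' --- one must keep the objects being counted genuinely distinct subtrees of $\Pi$, as already secured for the $\Pi_i$ by the level argument. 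I would also note a cheaper fallback in case ``(sub)derivation'' is meant liberally enough to admit open assumptions: the E-part of $\Branch{b}$, regarded as a derivation of $q_{\Branch{b}}$ from the open assumptions $\beta_0,\alpha_0,\dots,\alpha_k$, is trivially common to every instance of $\Branch{b}$, so even $> m^p$ copies are immediate; on this reading the passage through Lemma~\ref{Zero}, and hence the drop to $m^{p-1}$, would be there only to hand the subsequent argument a uniform position (level $\mu$) for the repeated piece.
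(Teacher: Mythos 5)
Your overall strategy coincides with the paper's: first invoke Lemma~\ref{Zero} to pin at least $m^{p-1}$ occurrences of the minimal formula $q_{\Branch{b}}$ to a single level $\mu$, then argue that the sub-derivations of $\Pi$ sitting above these occurrences are all the same, by descending through the E-part and the sub-derivations feeding its minor premises. The paper organizes this descent as an induction on the reverse rank $rr_{\Pi}(\Branch{b})$ (Definition~\ref{rr}) rather than on a shrinking height budget, and in the inductive step it asserts, citing Lemma~\ref{lemma:Upper-boundsE-parts}, that each instance of $\Branch{b}$ carries one occurrence of each inductively obtained $\Pi_{\Branch{b_i}}$, so that the joined derivation repeats as often as $\Branch{b}$ does. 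Your distinctness argument (pairwise incomparable positions at a common level) and your identification of the common E-part skeleton via Lemma~\ref{E-parts} and Corollary~\ref{coro:Top-formulaq} are fine and agree with the paper.

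The gap is exactly where you predict it, and your sketch does not close it. Lemma~\ref{lemma:Upper-boundsE-parts} bounds the number of \emph{types of E-parts}, not the number of distinct sub-derivations of a given minor premise $\alpha_t$: two instances of the very same branch $\Branch{b}$ may feed $\alpha_t$ with non-identical normal expanded derivations, so ``only boundedly many shapes at each layer'' does not force the $\Pi_i$ to coincide. The generic way to exploit such a bound is a pigeonhole at each layer of the descent, and that costs a multiplicative factor per layer over up to $m$ layers, which destroys the $m^{p-1}$ count --- the very danger you name without resolving. The paper avoids per-layer losses by building the claim ``one occurrence of $\Pi_{\Branch{b_i}}$ attached to each instance of $\Branch{b}$'' into the induction on $rr$; that claim is precisely the content your argument needs and does not supply. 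Finally, your fallback --- taking the E-part alone as a derivation of $q_{\Branch{b}}$ from open assumptions $\alpha_0,\ldots,\alpha_k$ --- is not available: a sub-derivation \emph{of} $\Pi$ is a subtree of $\Pi$ determined by its conclusion occurrence and hence contains the full derivations of the minor premises; an E-part with its minor premises left open is not such a subtree, and the use of this lemma in Theorem~\ref{main} requires genuinely repeated subtrees of $\Pi$.
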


\begin{proof}~of lemma~\ref{Zero}: Since $\Pi$ has its height bounded by $\|\alpha\|=m$, there are at most $m$ levels in $\Pi$. In order to  accommodate more than $m^p$ branches there must be a level $1\leq n\leq m$ that has at least $m^{p-1}$ branches with their respective conclusion occurring in this level. Thus, considering $k$ the length of the I-part of $\Branch{b}$, the minimal formula $q$ of $\Branch{b}$ occurs in level $\mu=n-k$ at least $m^{p-1}$ times.    
  
\end{proof}

To prove the lemma~\ref{Um}, we need the definition of the reverse rank of a branch instance in a proof $\Pi$, namely, $rr_{\Pi}(\Branch{b})$.

\begin{definition}\label{rr}
Given a Natural Deduction derivation (proof) $\Pi$ in $\mathcal{M}_{\imply}$, we define the reverse rank of a branch $\Branch{b}$ in $\Pi$, $rr_{\Pi}(\Branch{b})$ as following:
\begin{enumerate}
\item\label{rrZero} $rr_{\Pi}(\Branch{b})=0$, iff, $\Branch{b}$ has no conclusion of a $\imply$-E rule application;
\item\label{rrInd} Let $\Branch{b_1},\ldots,\Branch{b_k}$ be the branches instances in $\Pi$ with last formulas coorurrences $\alpha_1,\ldots,\alpha_k$, respectively, such that, $b_0,\alpha_1,\ldots,\alpha_k$ is the E-part of $\Branch{b}$ with $b_0$ the top-formula of this branch. $rr_{\Pi}(\Branch{b})=max(\{rr(\Branch{b_1}),\ldots,rr_{\Pi}(\Branch{b_k})\})+1$.
\end{enumerate}
\end{definition}

We use $rr(\Branch{b})$ whenever $\Pi$ in $rr_{\Pi}(\Branch{b})$ can be infered from the scope. 

\begin{proof}~ of lemma~\ref{Um}:
We reinforce that the conditions in lemma~\ref{Um} are the same in lemma~\ref{Zero}.
Thus, let $\Pi$ be a proof of $\alpha$, $m=\|\mathcal{T}(\alpha)\|$, $p>0$, $p\in\mathbb{N}$, such that, there is a branch $\Branch{b}$ that has more than $m^p$ instances occurring in $\Pi$. By applying lemma~\ref{Zero} we obtain the existence of a level $\mu$ where at least $m^{p-1}$ instances of the minimal formula of $q_{\Branch{b}}$ occur in this level $\mu$ in $\Pi$. We now prove by induction on $rr(\Branch{b})$ the lemma.
\begin{description}
\item[Base] In this case $rr(\Branch{b})=0$, so $\Branch{b}$ has only I-part, having no $\imply$-E rule conclusions. Thus, $q_{\Branch{b}}$ is a top-formula, and hence, the branch $\Branch{b}$ itself is a valid (sub)derivation in $\Branch{b}$. It has $m^{p-1}$ instances occurring in level $\mu$ in $\Pi$. 
\item[Inductive] Consider $\Branch{b}$, such that $rr(\Branch{b})>0$. 
So, $\Branch{b}$ has a non-empty E-part. Let $b_0,\alpha_1,\ldots,\alpha_k$ be this E-part, with $\Branch{b_1},\ldots,\Branch{b_k}$ the list of all branches secondary to $\Branch{b}$. By definition of branch, we have that $rr(\Branch{b_{i}})<rr(\Branch{b})$, for $i=1,k$,  so by inductive hypothesis  there are $\Pi_{b_{i}}$ (sub)derivations of $\Pi$, $i=1,k$. We remember that $b_0$ is the top-formula of $\vec{b}$. There is one occurrence of $\Pi_{b_{i}}$ to each $\Branch{b}$ instance in $\Pi$. This is a consequence of \ref{lemma:Upper-boundsE-parts}. Thus, joining all of these (sub)derivations in a whole (sub)derivation $\Pi_{\Branch{b}}$ shows us the existence of an equal number of instances of it as sub derivation of $\Pi$. Thus, summing up, there are at least $m^{p-1}$ instances of this joined subderivations having the corresponding $\Branch{b}$ instances as its main branch in $\Pi$. 
\end{description}

\end{proof}

\section{Redundancy in huge $\mathcal{M}_{\imply}$ mapped derivations}\label{sec:Main}


  Let $\Phi$ be the set of mapped linearly height-bounded ND \mil proofs. We use the notation $c(\Pi)$ to denote the formula that is the conclusion of $\Pi$. Note that $\Phi$ can be considered as a predicate $\Phi(x)$ that is true if and only if $x$ is assigned to a mapped linearly height-bounded ND proof. 
\[
S_{\Phi}=\{\mbox{$\Pi\in \Phi$}:\mbox{$\forall p\in\mathbb{N}, p>0, \exists n_0,\forall n>n_0$, $\|T_{c(\Pi)}\|=n$ and $\|\Pi|>n^p$)}\}
\]


In appendix~\ref{AppendixCC} it is discussed how hard is a set as the above one to be computationally verified. In fact, the following proposition is one of the reasons to work with sets as the above-defined one. Informally $S_{\Phi}$ contains all huge or hard linearly height upper-bounded proofs in \mil. Of particular interest is the following set. Let $Taut_{\mil}$ be the set of all ND mapped proofs of \mil tautologies. The following set:

\begin{definition} Let $\Theta$ be the following set:
  \[
  \Theta_{\mil} = \{\mbox{$\Pi\in Taut_{\mil}$}:\mbox{$\forall p\in\mathbb{N}, p>0, \exists n_0,\forall n>n_0$, $\|T_{c(\Pi)}\|=n$ and $\|\Pi|>n^p$)}\}
  \]
\end{definition}

$\Theta_{\mil}$ is the set of super-polynomially sized ND mapped proofs.

The primary purpose of this article is to show that every $\Pi\in S_{\Phi}$ is redundant.  I.e.,  there is at least one sub-proof $\Pi_{s}$  of $\Pi$ that repeats as many times as it is the size of $\Pi$. A consequence of the following theorem~\ref{main}. Remember that all proofs in $S_{\Phi}$ are linearly heigh-bounded.

\begin{theorem}\label{main} For all $p\in\mathbb{N}$, $p>3$, and for all $\Pi\in S_{\Phi}$, such that, $\|\mathcal{T}(c(\Pi))\|=m$ and $\|\Pi\|>m^p$,  then there is a sub-derivation $\Pi_{s}$ of $\Pi$ and a level $\mu$ in $\Pi$, such that, $\Pi_{s}$ has at least $m^{p-3}$ instances occurring in the level $\mu$ in $\Pi$.
\end{theorem}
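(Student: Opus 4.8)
The strategy is to combine the three linear upper-bounds already established — Lemma~\ref{lemma:Upper-boundsE-parts} (linearly many \emph{types} of E-parts), the linear bound on the number of levels (height $\le m$), and the linear bound on the number of syntax-tree vertices that can serve as minimal-formula positions — to force a single sub-derivation to repeat $m^{p-3}$ times, by a three-fold pigeonhole argument starting from the hypothesis $\|\Pi\|>m^p$.

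\textbf{Step 1: From proof size to a heavily-repeated branch.} Since $\Pi$ is linearly height-bounded, it has at most $m$ levels, and at each level the formula occurrences are labelled (via $l$) by vertices of $T_\alpha$, of which there are at most $m$; hence each level contains at most $m$ \emph{types} of formula occurrence, and the whole proof decomposes into at most $m^2$ ``columns'' of branch-instances sharing a level and a label. But $\Pi$ has $>m^p$ formula occurrences, so the total number of branch instances exceeds $m^{p}/m = m^{p-1}$ (each branch has length $\le m$). By Lemma~\ref{lemma:Upper-boundsE-parts} there are at most $m$ types of E-parts, and an E-part type together with the length of the I-part (at most $m$ choices) determines a branch type up to its I-part content; counting conservatively, there are at most $m^{2}$ branch types, so some branch type $\Branch{b}$ has more than $m^{p-1}/m^{2} = m^{p-3}$ instances. (The exact bookkeeping of how many ``types'' a branch has is where one must be careful; the slack between $p$ and $p-3$ is exactly what absorbs the height factor, the label factor, and the I-part-length factor.)

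\textbf{Step 2: Concentrate the repetition at one level.} Apply Lemma~\ref{Zero} (Spreading Branches Repetitions): if $\Branch{b}$ has more than $m^{p'}$ instances in $\Pi$ then at least $m^{p'-1}$ of them have their minimal formula $q_{\Branch{b}}$ occurring at a common level $\mu$. Starting from $>m^{p-3}$ instances — or, to be safe, running the size-to-branch count of Step~1 so as to land at $>m^{p-2}$ instances and then applying Lemma~\ref{Zero} once — we obtain a level $\mu$ carrying at least $m^{p-3}$ instances of $\Branch{b}$ all aligned at their minimal formulas.

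\textbf{Step 3: Pass from a repeated branch to a repeated sub-derivation.} Apply Lemma~\ref{Um} (Branches and sub-derivations): the repeated branch $\Branch{b}$ at level $\mu$ carries with it, by induction on the reverse rank $rr(\Branch{b})$ and using the one-to-one correspondence between E-parts and syntax-tree paths (Proposition~\ref{prop:one2one}), a sub-derivation $\Pi_{\Branch{b}}$ whose instances are in bijection with the instances of $\Branch{b}$; hence $\Pi_s := \Pi_{\Branch{b}}$ has at least $m^{p-3}$ instances occurring at level $\mu$. This yields the theorem with the constant $3$ accounting for the three independent linear parameters (number of levels, number of label-types per level / E-part-types, length of the I-part) each consumed once.

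\textbf{Main obstacle.} The delicate point is Step~1: making precise the claim that ``number of branch \emph{types} is linear (or at most quadratic) in $m$'' so that the arithmetic $m^{p} \rightsquigarrow m^{p-3}$ goes through with exactly the stated exponent. One must argue that a branch in an {\bf EmND} is determined, up to a polynomially-bounded choice, by its E-part type (linearly many, by Lemma~\ref{lemma:Upper-boundsE-parts}) together with the level of its conclusion and the length of its I-part; any over-counting here would erode the $p-3$ bound. The other subtlety is ensuring the counting in Step~1 and the single application of Lemma~\ref{Zero} in Step~2 compose to exactly three lost powers of $m$ rather than four — i.e.\ that one of the ``$m$'' factors (levels vs.\ the Lemma~\ref{Zero} level-concentration) is not double-counted. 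Everything else is a direct invocation of the lemmas already proved above.
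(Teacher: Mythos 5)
Your route differs from the paper's and, as written, does not close the arithmetic. You pigeonhole over \emph{branch types}: $>m^{p}$ formula occurrences give $>m^{p-1}$ branch instances (one power lost to branch length), and an asserted bound of $m^{2}$ branch types then yields $>m^{p-3}$ instances of one type (two more powers lost). From there you still have to pay: Lemma~\ref{Um} (which already invokes Lemma~\ref{Zero} internally) converts $>m^{k}$ instances of a branch into only $m^{k-1}$ instances of a sub-derivation, so your pipeline ends at $m^{p-4}$ at best --- and at $m^{p-5}$ if you additionally apply Lemma~\ref{Zero} as a separate step before Lemma~\ref{Um}, as your Step~2 suggests. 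You notice this tension yourself (``three lost powers rather than four'') but do not resolve it, and your proposed fix --- ``run Step~1 so as to land at $>m^{p-2}$'' --- is not available under your own counting, since it would require only $m$ branch types, whereas an E-part type does not determine the I-part (two branches with the same E-part can end at different minor premises, with I-parts of different lengths and contents). The $m^{2}$ bound on branch types is itself the unproven core of your Step~1; you correctly flag it as the main obstacle, but it is doing all the work.

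The paper avoids both problems by pigeonholing on pairs (level, syntax-tree vertex) rather than on branch types: writing $\|\Pi\|=\sum_{\lambda}\sum_{F\in V(T)}\|U_{F,\lambda}\|>m^{p}$ with at most $m$ levels and $m$ vertices gives a single vertex $F$ and level $\mu$ with $\|U_{F,\mu}\|>m^{p-2}$, costing only two powers. It then does a case split on the role of those $F$-occurrences (top-formula, $\imply$-I conclusion, $\imply$-E conclusion) and uses the E-mapping to argue that in the latter two cases all these occurrences already sit on instances of one and the same branch --- the mapped E-part is uniquely determined by the vertex via Lemma~\ref{lemma:Subformula} and Proposition~\ref{prop:one2one} --- so no further pigeonhole over branch types is needed. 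A single application of Lemma~\ref{Um} then costs the third and final power, landing exactly at $m^{p-3}$. To repair your argument you would either have to prove a \emph{linear} bound on branch types in an {\bf EmND} (which appears false without further restrictions) or restructure the count as the paper does, letting the mapping, not a type count, supply the uniqueness of the repeated branch.
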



\begin{proof}~ of theorem~\ref{main}
Consider $p\in\mathbb{N}$, $p>2$, then there is $m_0\in\mathbb{N}$, such that for all $m>m_0$, and $\Pi\in S_{\Phi}$,  $\|T_{c(\Pi)}\|=m$ and $\|\Pi\|>m^p$. Let $T=T_{c(\Pi)}$ and $U_{F,\lambda}(\Pi)=\{v:\mbox{$v$ is in level $\lambda$ and $l(v)=F$}\}$ then: 
\[
\|\Pi\|=\sum_{\lambda=1,h(T)}\sum_{F\in V(T)} \|U_{F,\lambda}(T)\| > m^p
\]
Thus, there are $0<\mu\leq m-1$ and $F\in V(T)$, such that $\|U_{F,\mu}|>m^{p-2}$. Considering that $U_{F,\mu}=U_{F,\mu}(\Pi)$ and analogously for $Top$, $Uno$ and $Duo$, we have that:
\[
U_{F,\mu}= Top_{F,\mu}\oplus Uno_{F,\mu}\oplus Duo_{F,\mu}
\]
, where $Top_{F,\mu}$ is the subset of $U_{F,\mu}$ of $F$-occurrences in $\Pi$ in level $\mu$ as top-formulas, $Uno_{F,\mu}$ is the subset of $U_{F,\mu}$ of $F-occurrences$ in $\Pi$, in level $\mu$, as conclusions of $\imply$-I introductions, and, $Duo_{F,\mu}$ are the $F$-occurrences in $\Pi$, in level $\mu$, as conclusions of $\imply$-E rules. Thus, we have at least one of the following alternatives that we analyze in the sequel. 
\begin{description}
\item[Top-Formulas] $\|Top_{F,\mu}|>m^{p-2}$ and in this case we reach the conclusion of the theorem, for $F$, itself, is a sub-derivation of $\Pi$ that has more than $m^{p-2}$  instances occurring in $\Pi$.
\item[$\imply$-I] $\|Uno_{F,\mu}|>m^{p-2}$ and in this case, there is a sequence of length $k$, $k>0$, of $\imply$-I rules for each branch $\Branch{b}$ with  $F\in Uno_{F,\mu}$. Thus, the minimal formula of each branch $\Branch{b}$ occurs in level $\mu-k$. Thus we have that the branch $\Branch{b}$ occurs $m^{p-2}$ times in $\Pi$. Finally, by applying lemma~\ref{Um}, we conclude that there is a sub-derivation $\Pi_{s}$ of $\Pi$ that has at least $m^{p-3}$ instances in $\Pi$.
\item[$\imply$-E]  Since $\Pi$ is a mapped derivation, then, all $F$'s, as instances of conclusions of $\imply$-E rules, have the same major premise, that is the same instance of a formula $F^{\prime}\imply F\in T_{c(\Pi)}$. Thus, there is a unique branch $\Branch{b}$ that contains both $F$ and $F^{\prime}\imply F$ as consecutive formulas. Thus, this branch has the same minimal formula instance, as well as the same top-formula and I-part, and, it has $m^{p-2}$ instances occurring in $\Pi$. Again by an application of lemma~\ref{Um}, we obtain the conclusion of the theorem.
\end{description}
\end{proof}

  From theorem~\ref{main} we can roughly state the corollary~\ref{roughly}.

\begin{corollary}\label{roughly} In every family of super-polynomial and linearly height upper-bounded proofs all of them are super polynomial redundant.
\end{corollary}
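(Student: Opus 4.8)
The plan is to treat Corollary~\ref{roughly} as the family-level restatement of Theorem~\ref{main}: a proof $\Pi$ belonging to such a family is \emph{super-polynomially redundant} when it contains a single sub-derivation whose number of occurrences in $\Pi$, regarded as a function of $m=\|T_{c(\Pi)}\|$ as $\Pi$ ranges over the family, eventually exceeds every polynomial in $m$. So the argument is: for each $\Pi$ choose the polynomial degree $p$ as large as $\|\Pi\|$ permits, apply Theorem~\ref{main} with that $p$, and observe that $p$ must grow without bound along the family precisely because the family is super-polynomial.

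First I would fix a family $\mathcal{F}$ of linearly height-bounded ND \mil\ proofs that is super-polynomial; unwinding the definition of $S_{\Phi}$, this means that for every $p\in\mathbb{N}$ with $p>0$ there is a threshold $n_0(p)$ such that every $\Pi\in\mathcal{F}$ with $\|T_{c(\Pi)}\|=m>n_0(p)$ satisfies $\|\Pi\|>m^{p}$. Since every member of $\mathcal{F}$ is linearly height-bounded, every member lies in the scope of Theorem~\ref{main}, and in particular every member with large enough conclusion belongs to $S_{\Phi}$.

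Next, fix a member $\Pi\in\mathcal{F}$ and let $m=\|T_{c(\Pi)}\|$. Let $p(\Pi)$ be the largest integer with $\|\Pi\|>m^{p(\Pi)}$; once $m>n_0(4)$ we have $p(\Pi)\ge 4>3$, so Theorem~\ref{main} applies with $p=p(\Pi)$ and yields a sub-derivation $\Pi_{s}$ of $\Pi$ together with a level $\mu$ such that $\Pi_{s}$ has at least $m^{p(\Pi)-3}$ instances at level $\mu$, hence at least $m^{p(\Pi)-3}$ instances in $\Pi$. Super-polynomiality of $\mathcal{F}$ forces $p(\Pi)\to\infty$ as $\Pi$ ranges over $\mathcal{F}$ (for every fixed $p$ we have $\|\Pi\|>m^{p}$ once $m>n_0(p)$, hence $p(\Pi)\ge p$ there), so the exponent $p(\Pi)-3$ eventually exceeds any fixed constant and the repetition count $m^{p(\Pi)-3}$ eventually dominates every polynomial in $m$. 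Thus every member of $\mathcal{F}$ whose conclusion is large enough carries a single sub-derivation repeated super-polynomially many times, which is exactly the asserted redundancy; the qualifier ``roughly'' in the statement absorbs the ``all but finitely many'' caveat, since no super-polynomial lower bound is even claimed for the members with small conclusions.

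The only genuine subtlety is the order of the quantifiers: Theorem~\ref{main} fixes $p$ first and only then demands $\|\Pi\|>m^{p}$, whereas super-polynomial redundancy wants one sub-derivation per proof whose repetition count beats \emph{every} polynomial at once. The fix — choosing $p=p(\Pi)$ maximal for each $\Pi$ and noting $p(\Pi)\to\infty$ along a super-polynomial family — is routine once written down, but it is exactly the place where the argument uses that $\mathcal{F}$ is a family rather than a single proof, and where one must accept that the sub-derivation $\Pi_{s}$ delivered by Theorem~\ref{main} is allowed to depend on $\Pi$ (which is harmless for the statement as phrased).
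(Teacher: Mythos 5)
Your proposal is correct and follows the same route the paper intends: the paper offers no explicit proof of Corollary~\ref{roughly} beyond the remark that it follows ``roughly'' from Theorem~\ref{main}, and your argument is the natural unwinding of that claim. Your explicit handling of the quantifier order --- choosing the maximal exponent $p(\Pi)$ for each member and observing that super-polynomiality of the family forces $p(\Pi)\to\infty$, so the repetition count $m^{p(\Pi)-3}$ beats every polynomial for all sufficiently large conclusions --- is precisely the detail the paper leaves implicit, together with the correct caveat that the conclusion holds only for members with large enough conclusions and that the witnessing sub-derivation may depend on $\Pi$.
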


If we inspect the above theorem~\ref{main} proof, we can see that we can replace the condition on {\bf linearly height upper-bounded} mapped proofs by any fixed polynomial height upper-bounded class of mapped proofs. Thus,  if we  define $\Phi(q)$ as the class of $n^q$ height upper-bounded mapped, normal and expanded, proofs, and
\[
S_{\Phi}^q=\{\mbox{$\Pi\in \Phi(q)$}:\mbox{$\forall p\in\mathbb{N}, p>0, \exists n_0,\forall n>n_0$, $\|T_{c(\Pi)}\|=n$ and $\|\Pi|>n^p$)}\}
\]then theorem~\ref{main} becomes:

\begin{theorem}\label{generalmain}
   For all $p\in\mathbb{N}$, $p>q+2$, and for all $\Pi\in S_{\Phi}^q$, such that, $\|\mathcal{T}(c(\Pi))\|=m$ and $\|\Pi\|>m^p$,  then there is a sub-derivation $\Pi_{s}$ of $\Pi$ and a level $\mu$ in $\Pi$, such that, $\Pi_{s}$ has at least $m^{p-(q+3)}$ instances occurring in the level $\mu$ in $\Pi$.
\end{theorem}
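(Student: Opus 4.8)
The plan is to adapt the proof of Theorem~\ref{main} almost verbatim, replacing the linear height bound by the polynomial bound $n^q$ and re-tracking how the exponent in the counting argument decreases at each step. The only place the height bound enters the argument for Theorem~\ref{main} is through the pigeonhole step: a proof $\Pi$ with $\|\Pi\|>m^p$ whose syntax tree has size $m$ has its nodes distributed over $h(T)$ levels and $|V(T)|\le m$ labels, so some pair (level $\mu$, label $F$) receives more than $\|\Pi\|/(h(T)\cdot m)$ occurrences. With a linear height bound $h(T)\le m$ this gives more than $m^{p-2}$; with the polynomial bound $h(\Pi)\le m^q$ the same computation gives more than $m^p/(m^q\cdot m)=m^{p-(q+1)}$. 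So the first step is to rerun the displayed double sum $\|\Pi\|=\sum_{\lambda}\sum_{F\in V(T)}\|U_{F,\lambda}(T)\|>m^p$ and conclude that there exist $0<\mu\le m^q$ and $F\in V(T)$ with $\|U_{F,\mu}\|>m^{p-(q+1)}$.

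Next I would repeat the case split $U_{F,\mu}=Top_{F,\mu}\oplus Uno_{F,\mu}\oplus Duo_{F,\mu}$ exactly as in the proof of Theorem~\ref{main}, so that at least one of the three subsets has size $>m^{p-(q+1)}$. In the \textbf{Top-Formulas} case $F$ itself is a sub-derivation occurring $>m^{p-(q+1)}$ times, which already beats the claimed $m^{p-(q+3)}$. In the \textbf{$\imply$-I} case the $k$ trailing $\imply$-I rules push the minimal formula of each branch $\Branch{b}$ to level $\mu-k$, giving $>m^{p-(q+1)}$ instances of $\Branch{b}$; here I need a version of Lemmas~\ref{Zero} and~\ref{Um} valid under the polynomial height bound. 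Lemma~\ref{Zero} as stated only costs one pigeonhole over the $\le m^q$ levels rather than $\le m$ levels, so from $>m^{p-(q+1)}$ branch instances it yields a level with $>m^{p-(q+1)}/m^q=m^{p-(2q+1)}$ instances of the minimal formula at a fixed level — which is weaker than what I want. The cleaner route is to observe that Lemma~\ref{Um} (branches and sub-derivations) and Lemma~\ref{Zero} were invoked in the proof of Theorem~\ref{main} to pass from ``$>m^{r}$ instances of a branch'' to ``$>m^{r-1}$ instances of a sub-derivation'', and that single-step loss is all that is needed: the branch-to-sub-derivation passage via the reverse-rank induction of Definition~\ref{rr} does not re-use the height bound, only Lemma~\ref{lemma:Upper-boundsE-parts}. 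So in the $\imply$-I case I get a sub-derivation $\Pi_s$ with $>m^{p-(q+2)}$ instances, comfortably more than $m^{p-(q+3)}$. The \textbf{$\imply$-E} case is identical to Theorem~\ref{main}: because $\Pi$ is mapped, all the $F$-occurrences that are $\imply$-E conclusions share one instance of the major premise $F'\imply F$, hence lie on one branch $\Branch{b}$ with a single minimal-formula instance, giving $>m^{p-(q+1)}$ instances of $\Branch{b}$ and then, via Lemma~\ref{Um}, $>m^{p-(q+2)}$ instances of a sub-derivation.

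Collecting the three cases, every branch of the argument delivers a sub-derivation $\Pi_s$ with strictly more than $m^{p-(q+2)}$ instances at a fixed level $\mu$, and a fortiori at least $m^{p-(q+3)}$ instances, which is the statement of Theorem~\ref{generalmain}; the hypothesis $p>q+2$ is exactly what is needed to keep the exponent nonnegative and the counts nontrivial. I expect the main obstacle to be bookkeeping rather than conceptual: one must be careful that the pigeonhole over levels is now over $m^q$ levels instead of $m$, and check that Lemmas~\ref{Zero} and~\ref{Um} are applied in the form ``$>N$ branch instances $\Rightarrow$ $>N/\text{poly}$ sub-derivation instances'' with the polynomial factor accounted for correctly — in particular that the level-pigeonhole inside Lemma~\ref{Zero} is not double-counted against the level-pigeonhole already performed in the first step. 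A clean way to avoid the double count is to state and use, as in the proof of Theorem~\ref{main}, that the $k$ trailing $\imply$-I steps move the minimal formula to a \emph{single} determined level $\mu-k$ for all the relevant branch instances simultaneously, so no further pigeonhole over levels is incurred before invoking the reverse-rank induction. Modulo that care, the proof is a faithful transcription of the proof of Theorem~\ref{main} with ``$m$'' replaced by ``$m^q$'' in the level count and the exponent decremented by $q+1$ instead of $2$ at the first step.
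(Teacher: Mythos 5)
Your proposal is correct and is essentially the paper's own argument: the paper gives no separate proof of Theorem~\ref{generalmain}, saying only that one should inspect the proof of Theorem~\ref{main} and replace the linear height bound by the $n^q$ bound, and your write-up is exactly that inspection with the exponent bookkeeping carried out (pigeonhole over $m^q$ levels and $m$ labels costs $q+1$, the branch-to-sub-derivation step costs at most one more, and $m^{p-(q+2)}\ge m^{p-(q+3)}$). Your observation that the level-pigeonhole inside Lemma~\ref{Zero} must not be charged a second time --- because the trailing $\imply$-I steps already place all the relevant minimal formulas at the single level $\mu-k$ --- is a point the paper glosses over, and it is handled correctly here.
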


Finally, we can conclude that the corollary~\ref{roughlyII} holds concerning $\Phi(q)$ too.

\begin{corollary}\label{roughlyII} In every family of super-polynomial and fixed polynomial height upper-bounded proofs all of them are super polynomial redundant.
\end{corollary}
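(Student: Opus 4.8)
The plan is to obtain Corollary~\ref{roughlyII} from Theorem~\ref{generalmain} by the same reasoning used to pass from Theorem~\ref{main} to Corollary~\ref{roughly}; the only extra bookkeeping is tracking how the polynomial thresholds compose. Fix a family $\mathcal{F}$ of \mil\ proofs that is super-polynomial and whose heights are bounded by one fixed polynomial. Since the height bound is fixed there is a single $q$ with $\mathcal{F}\subseteq\Phi(q)$, and as every member is super-polynomial we get $\mathcal{F}\subseteq S_{\Phi}^{q}$. Unwinding the phrase ``all of them are super-polynomial redundant'': it means that for every $P\in\mathbb{N}$ with $P>0$ there is a threshold $n_{0}$ such that every $\Pi\in\mathcal{F}$ with $\|T_{c(\Pi)}\|=m>n_{0}$ possesses a sub-derivation occurring more than $m^{P}$ times in $\Pi$ --- the ``almost all'' qualification of the abstract being exactly the restriction $m>n_{0}$. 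So fix such a $P$; it suffices to consider polynomials of the form $m^{P}$, since any polynomial is eventually dominated by one of these.

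First I would set $p=P+q+3$. Because $\mathcal{F}\subseteq S_{\Phi}^{q}$, instantiating the defining condition of $S_{\Phi}^{q}$ at this exponent $p$ produces an $n_{0}$ such that every $\Pi\in\mathcal{F}$ with $\|T_{c(\Pi)}\|=m>n_{0}$ has $\|\Pi\|>m^{p}$. Moreover $p=P+q+3\geq q+4>q+2$, so such a $\Pi$ meets the hypothesis $p>q+2$ of Theorem~\ref{generalmain}. Next I would apply Theorem~\ref{generalmain} with this $p$ to each such $\Pi$: it yields a sub-derivation $\Pi_{s}$ of $\Pi$ and a level $\mu$ with at least $m^{\,p-(q+3)}=m^{P}$ instances of $\Pi_{s}$ in level $\mu$, hence at least $m^{P}$ instances of $\Pi_{s}$ in $\Pi$. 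Since $P>0$ was arbitrary, this exhibits, for every polynomial, a sub-derivation of $\Pi$ repeated super-polynomially many times in all but finitely many members of $\mathcal{F}$, which is the claimed super-polynomial redundancy.

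I would then note, as remarked before the statement, that Theorem~\ref{generalmain} is itself just Theorem~\ref{main} re-run with the linear height bound replaced by $n^{q}$ (the counting argument splitting $U_{F,\mu}$ into $Top$, $Uno$, $Duo$ is unaffected by how many levels there are, as long as that number is polynomially bounded), so no step of the argument breaks when we drop from $\Phi$ to $\Phi(q)$.

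The only genuinely delicate point is quantifier discipline rather than mathematical content: the threshold $n_{0}$ legitimately depends on the target polynomial $P$ (through $p=P+q+3$), so what is actually proved is ``for each $P$, almost all members of $\mathcal{F}$ are $m^{P}$-redundant'', not a single uniform claim; I would make this dependence explicit and check it matches the intended meaning of ``super-polynomial redundant''. Everything else --- the arithmetic $p-(q+3)=P$, the verification $p>q+2$, and the reduction to polynomials $m^{P}$ --- is routine.
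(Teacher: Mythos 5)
Your proposal is correct and follows exactly the route the paper intends: Corollary~\ref{roughlyII} is obtained by instantiating Theorem~\ref{generalmain} at $p=P+q+3$ (so that $p-(q+3)=P$ and $p>q+2$ holds automatically), just as Corollary~\ref{roughly} follows from Theorem~\ref{main}; the paper gives no further argument beyond this. Your explicit remark that the threshold $n_{0}$ depends on the target exponent $P$ is a useful clarification of the ``almost all'' phrasing that the paper leaves implicit.
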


\section{Conclusion}\label{sec:Conclusion}

This article precisely defines huge proofs within the system of Natural Deduction for the Minimal implicational propositional logic \mil. This is what we call unlimited family of super-polynomial proofs. We consider huge families of expanded normal form mapped proofs, a device to explicitly help to count the E-parts of a normal proof in an adequate way. Finally, we show that for almost all members of the a super-polynomial family there at least one sub-proof or derivation of each of them that is repeated super-polynomially many times. This last property we call super-polynomial redundancy. Summing up, in this article we show that huge proofs are highly redundant. The main application of the result we show here is in compressing ND proofs into DAGs (Directed Acyclic Graphs) that can be viewed as polynomial certificates for provability in \mil. The compression is a refinement of the horizontal compression presented in \cite{Studia2019} and \cite{BSLO2020}. In the later,\cite{BSLO2020}, the path certificates are added to the DAGs certificates  to obtain polynomial certificates for provability in \mil, proving that NP=PSPACE. The refinement of the cited  compression that we develop here in this article can be used together with the linear height normal proofs for non-hamiltonicity, presented in the appendix of \cite{Exponential},  to provide a simpler proof of NP=CoNP that does not need \cite{Hudelmaier}.

\appendix

\section{Some useful proof-theoretical results}\label{AppendixA}

\begin{proposition}\label{appendix:greedy-imply-intro}
  Let $\Pi$ be a proof of $\alpha$ in \mil. Then there is a proof of $\alpha$ in \mil, where every  $\imply$-Intro applications are greedy.
\end{proposition}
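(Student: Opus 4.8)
The plan is to produce $\Pi'$ from $\Pi$ by a purely local surgery on the discharge assignment: leave every $\imply$-Elim untouched, and replace each application of $\imply$-Intro deriving $A\imply B$ from a subderivation of $B$ by its \emph{greedy} version, i.e. the one that discharges \emph{every} occurrence of $A$ that is still open in that subderivation (possibly none). Since the conclusion of an $\imply$-Intro application is the formula $A\imply B$ regardless of how many occurrences of $A$ are discharged, and every discharged formula is still an occurrence of the antecedent $A$, each rule application in $\Pi'$ is legitimate and $\Pi'$ has the same underlying tree and the same end-formula $\alpha$ as $\Pi$. The only thing left to check is that $\Pi'$ is still a \emph{proof}, i.e. that it has no open assumption; this is where we use that $\Pi$ itself is a proof of $\alpha$ and not merely a derivation (compare the derivation of Figure~\ref{third}, which genuinely cannot be made greedy).

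The key step is the following monotonicity claim, proved by induction on the structure of the tree: for every node $w$, writing $\Pi_w$ (resp.\ $\Pi'_w$) for the subderivation of $\Pi$ (resp.\ $\Pi'$) ending at $w$, the set of open-assumption occurrences of $\Pi'_w$ is contained in the set of open-assumption occurrences of $\Pi_w$. For a leaf the two sets coincide. If $w$ is the conclusion of an $\imply$-Elim with premises at $w_1,w_2$, the open assumptions of $\Pi'_w$ are the union of those of $\Pi'_{w_1}$ and $\Pi'_{w_2}$, which by the induction hypothesis is contained in the union of those of $\Pi_{w_1}$ and $\Pi_{w_2}$, i.e. in the open assumptions of $\Pi_w$. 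If $w$ is the conclusion of an $\imply$-Intro introducing $A\imply B$ from $w_1$, then the open assumptions of $\Pi'_w$ are those of $\Pi'_{w_1}$ with \emph{all} occurrences of $A$ removed, while those of $\Pi_w$ are those of $\Pi_{w_1}$ with \emph{some} occurrences of $A$ removed; combining the induction hypothesis with the fact that removing all $A$-occurrences yields a subset of removing only some of them, the inclusion follows.

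Applying the claim at the root $r$ of $\Pi$ and using that $\Pi$ is a proof, the open assumptions of $\Pi'=\Pi'_r$ form a subset of the open assumptions of $\Pi=\Pi_r=\emptyset$, hence $\Pi'$ has no open assumption. Therefore $\Pi'$ is a proof of $\alpha$ in \mil\ all of whose $\imply$-Intro applications are greedy, which is what we wanted.

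\textbf{Main obstacle.} There is no deep difficulty here; the one point that needs care is the bookkeeping of open-assumption \emph{occurrences} rather than \emph{formulas}: one must observe that the greedy rewriting is monotone (it can only discharge more occurrences, never fewer) and that the conclusion formula of an $\imply$-Intro is insensitive to the number of occurrences discharged, so that making an upper $\imply$-Intro greedy can at worst turn some lower $\imply$-Intro into a vacuous one — which is still a permissible greedy step — and never invalidates the derivation.
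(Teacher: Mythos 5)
Your proposal is correct and follows essentially the same route as the paper's own proof: replace every $\imply$-Intro by its greedy version and observe that each top-formula discharged in $\Pi$ is then discharged at the earliest eligible $\imply$-Intro, with later applications becoming vacuous, so no assumption is left open. Your explicit monotonicity induction on open-assumption occurrences is just a more careful write-up of the paper's informal per-top-formula argument.
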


\begin{proof} Since $\Pi$ is a proof, every top-formula is discharged by an application of $\imply$-Intro rule. Fix any top-formula that it is not discharged by the first $\imply$-Intro application top-down. This top-formula must be discharged in a subsequent $\imply$-Intro downnwards. If we apply greedy $\imply$-Intro applications, the first application is greedy and discharges the formula, while all subsequent $\imply$-Intro are $\imply$-Intro vacuous applications. This works for each greedy $\imply$-Intro applications.
  \end{proof}

We have the following corollary.

\begin{corollary}\label{coro:NDWithSetsOdDependencies}
    For every \mil tautology $\alpha$ there is a proof in $\mil^{\rightarrow}$ of $\alpha$
    \end{corollary}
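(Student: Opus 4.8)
The plan is to read the statement off Proposition~\ref{appendix:greedy-imply-intro} essentially for free. First I would unwind the definitions. A \mil-tautology $\alpha$ is, by the completeness of Natural Deduction for minimal implicational logic, derivable in the liberal system \mil; fix such an ND-proof $\Pi$ of $\alpha$. (If one reads ``tautology'' proof-theoretically, i.e.\ as ``provable in \mil'', this step is vacuous.) The goal is then to replace $\Pi$ by a derivation that meets the syntactic discipline of $\mil^{\rightarrow}$, namely one in which the $\imply$-Intro rule is always applied greedily.

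The key step is to apply Proposition~\ref{appendix:greedy-imply-intro} directly to $\Pi$: it produces a proof $\Pi'$ of the same conclusion $\alpha$ in which every $\imply$-Intro application discharges all currently-open occurrences of its antecedent — allowing vacuous greedy applications, i.e.\ the degenerate case where the premise does not depend on the antecedent at all. Since $\Pi'$ is still a proof (a closed derivation), every top-formula is discharged by some $\imply$-Intro below it, so no open assumption survives.

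Finally I would observe that $\mil^{\rightarrow}$ is by definition the calculus obtained from \mil by replacing the liberal $\imply$-Intro rule with its greedy variant; hence $\Pi'$ is literally a legal derivation in $\mil^{\rightarrow}$, which gives the desired proof of $\alpha$ in $\mil^{\rightarrow}$ and completes the argument.

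I do not expect any genuine obstacle: the mathematical content lives entirely in Proposition~\ref{appendix:greedy-imply-intro}, and this corollary merely repackages it together with completeness of \mil. The one point worth a sentence of care is that the greedy regime never forces an ill-formed inference — but vacuous greedy applications are admissible, exactly as in the situation where the derivation of figure~\ref{third} is excluded while $A\imply(A\imply A)$ remains provable — so demonstrability of valid formulas is preserved under the restriction.
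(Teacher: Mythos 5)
Your proposal matches the paper's intent exactly: the corollary is stated immediately after Proposition~\ref{appendix:greedy-imply-intro} as a direct consequence of it, and your argument (take any \mil{} proof of the tautology, apply the proposition to make every $\imply$-Intro greedy, and observe the result is by definition a $\mil^{\rightarrow}$ derivation) is precisely that repackaging. No gap; same approach as the paper.
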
  

\begin{proposition}\label{appendix:linear-upper-bound-on-expanded}
  Let $\Pi$ be a proof/derivation, in $\mil$, of $\alpha$ from $\Gamma=\{\gamma_1,\ldots,\gamma_k\}$. There is a proof in expanded form of $\alpha$ from $\Gamma$.
\end{proposition}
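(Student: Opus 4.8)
The plan is to pass to a normal derivation and then iterate a single local \emph{expansion} step, controlled by a multiset measure, until every minimal formula is atomic. By the Normalization theorem we may assume $\Pi$ is already a normal derivation of $\alpha$ from some $\Delta\subseteq\Gamma$, hence in particular a derivation of $\alpha$ from $\Gamma$. In a normal derivation every branch $\Branch{b}$ carries a well-defined minimal formula $m_{\Branch{b}}$, namely the formula occurrence at the boundary of its E-part and its I-part (Section~\ref{terminology}), and $\Pi$ is in expanded form exactly when each $m_{\Branch{b}}$ is atomic. So it suffices, given a normal derivation of $\alpha$ from $\Gamma$ with at least one non-atomic minimal formula, to produce another one that is strictly simpler in a well-founded ordering.

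The \emph{expansion step} is as follows. Pick a branch $\Branch{b}$ whose minimal formula $m=m_{\Branch{b}}$ is not atomic and write $m=\beta_1\imply\beta_2$. By normality, the occurrence $m$ is either the top-formula of $\Branch{b}$ or the conclusion of an $\imply$-Elim, and in either case it is either the last formula of $\Branch{b}$ (a minor premise or the conclusion of $\Pi$) or the premise of an $\imply$-Intro. Insert at that occurrence the fragment: use $m$ as the major premise of a fresh $\imply$-Elim whose minor premise is a new assumption $[\beta_1]$, concluding $\beta_2$; then apply $\imply$-Intro discharging exactly that new $[\beta_1]$, concluding $\beta_1\imply\beta_2$ once more; and let this reconstructed occurrence take over the role $m$ had. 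The result is again a normal derivation of $\alpha$ from $\Gamma$: the only added assumption $[\beta_1]$ is discharged by the new $\imply$-Intro, so the conclusion and the open assumptions are unchanged; and no maximal formula appears, since the reconstructed $\beta_1\imply\beta_2$ is a conclusion of $\imply$-Intro used only as a premise of $\imply$-Intro, a minor premise, or the conclusion of the derivation --- never as the major premise of an $\imply$-Elim. (When $m$ was a discharged top-formula, it stays discharged, possibly now vacuously, by the same $\imply$-Intro below it, which is permitted in \mil.)

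Inspecting the new branch structure: the branch that carried $m$ now has minimal formula $\beta_2$, the assumption $[\beta_1]$ is the minimal formula of a fresh one-formula branch, and every other branch --- with its minimal formula --- is unchanged. Let $M(\Pi)=\{\,\deg(m_{\Branch{b}}) : \Branch{b}\ \text{a branch of}\ \Pi,\ m_{\Branch{b}}\ \text{not atomic}\,\}$, with $\deg$ counting occurrences of $\imply$. Since $\deg(\beta_1)<\deg(m)$ and $\deg(\beta_2)<\deg(m)$, one step removes $\deg(m)\ge 1$ from $M(\Pi)$ and inserts only the strictly smaller values $\deg(\beta_1)$ and $\deg(\beta_2)$ (and each only if positive), so $M(\Pi)$ strictly decreases in the Dershowitz--Manna multiset order over $\mathbb{N}$. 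That order is well-founded, so after finitely many steps $M(\Pi)=\emptyset$: the derivation is then normal, has conclusion $\alpha$ and $\Gamma$ among its open assumptions, and all its minimal formulas are atomic, i.e.\ it is in expanded form. (This also shows the expanded derivation may be taken normal, which is Proposition~\ref{prop:NormalExpanded}.)

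I expect the main obstacle to be the bookkeeping in the expansion step rather than the termination measure: one must check, in each degenerate configuration --- E-part empty, I-part empty, $m$ an open assumption $\gamma_i$, $m$ the conclusion of $\Pi$, $m$ a minor premise --- that the reconstructed occurrence of $\beta_1\imply\beta_2$ is reinserted in the unique position consistent with the surrounding rules and that no assumption is left undischarged, so that ``normal derivation of $\alpha$ from $\Gamma$'' is genuinely preserved. Once that is nailed down, the multiset argument finishes the proof. (A bound on how much a single step lengthens a branch, combined with the square-height bound of \cite{Studia2019}, is what would be needed for the linear-height refinement mentioned before this proposition, but it plays no role in the existence statement proved here.)
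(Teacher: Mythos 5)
Your proposal is correct and is essentially the paper's own argument: the appendix proof performs exactly your expansion step, replacing a non-atomic minimal formula $\varphi_1\imply\varphi_2$ by the $\imply$-Elim-under-$\imply$-Intro fragment with a fresh discharged assumption $[\varphi_1]$, and iterates ``until it is atomic.'' You merely make explicit what the paper leaves implicit --- the prior appeal to normalization, the check that normality and the open assumptions are preserved, and a well-founded (multiset) termination measure in place of the paper's informal recursion on $\varphi_2$.
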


\begin{proof} Proof of proposition~\ref{appendix:linear-upper-bound-on-expanded} and proposition~\ref{prop:NormalExpanded}.  
  If $\varphi_1\imply \varphi_2$ is a minimal formula in some branch of $\Pi$. We replace $\varphi_1\imply\varphi_2$ by:
  \begin{prooftree}
    \AxiomC{$[\varphi_1]$}
    \AxiomC{$\varphi_1\imply\varphi_2$}
    \BinaryInfC{$\varphi_2$}
    \UnaryInfC{$\varphi_1\imply\varphi_2$}
  \end{prooftree}
  Proceed  to the replacing, now about $\varphi_2$, until it is atomic.

  \end{proof}

\section{Super-polynomially sized propositional proofs/derivations}\label{AppendixCC}

This section is a variation of the corresponding section in \cite{Exponential}, where we define exponentially sized labelled trees as a counterpart of exponentially sized N.D. proofs/derivations.  

 Concerning the computational complexity of propositional proofs, we consider the size of a proof as the number of symbol occurrences used to write it, i.e., the length of the linearized proof-tree.  If we put all the symbol occurrences used to write a Natural Deduction derivation $\Pi$ side by side in a long string then the size of the derivation, denoted by $\size{\Pi}$, is the length of this string. The function $\size{\;}:Strings\longrightarrow \mathbb{N}$, the size-of-string function, denotes the mapping of strings to their corresponding sizes\footnote{Some authors use the term {\em lenght} instead of {\em size}}. For derivations $\Pi$ of $\alpha$ from $\Delta=\{\delta_1,\ldots, \delta_n\}$ we estimate the complexity of the derivation by means of a function of $\size{\alpha}+\sum_{i=1,n}\size{\delta_i}$ into the size of the derivation itself. Given $\Delta$ and $\alpha$, such that, $\Delta\models\alpha$, we know that there are infinitely many derivations $Pi$ of $\alpha$ from $\Delta$, even for normal derivations, there are formulas that have infinitely many normal proofs. Thus, an adequate estimation of the complexity of a tautology is to know how big it is the smallest proof of it when compared to the size of the tautology itself. This gives rise to a function $CC:\mathbb{N}\rightarrow\mathbb{N}$, as follows:

 \[
 CC(n)=min_{\alpha\in\mathcal{S}(n)}\{\size{\Pi}:\mbox{$\Pi$ is a proof of $\alpha$}\}
 \]
where $\mathcal{S}(n)$ is the set of all tautologies that have length $n$. The minimum of an empty set of formulas is 0. The complexity of recognizing provable formulas (tautologies) is no better than the lower-bound function that provides the size of the smallest Natural Deduction proof among the proof of all formulas of the same length, the $CC$ function above defined. Note the above function works on any logic that has finite proofs in a system like Natural Deduction.  We describe below another way of estimating the computational complexity of provable formulas by using a set of proofs. 

A set $\mathcal{S}$ of Natural Deduction proof-trees is unlimited, if and only if, for every $n>0$ there is $\Pi\in \mathcal{S}$, such that, $\size{\Pi}>n$. We remember that, in the following definition, $c(\Pi)$ denotes the formula that is the conclusion of $\Pi$. 


\begin{definition}\label{SP:local}
  An unlimited set $\mathcal{S}$ of ND proof-trees is super-polynomially big or $\mathcal{S}\mathcal{P}$ for short, or simply {\bf huge} iff for every $p\in\mathbb{N}$, $p\ge 1$, there are $n_0\in\mathbb{N}$ and $c\in\mathbb{R}$, $c>0$, such that, for every $n>n_0$ and for every $\Pi\in\mathcal{S}$, if $\size{c(\Pi)}= n$ then $\size{\Pi}\ge c\times n^p$.
\end{definition}

We use $\size{A}$ to denote the length of $A$. 

The following definitions and facts justify the primary purpose of the above definition.  


We remind the reader that the size of the alphabet used to write the strings is at least 2. Unary strings cannot be consistently used in computational complexity estimations, since its use trivializes\footnote{If there is a NP-complete Formal Language $L\subseteq\Sigma^{\star}$, where $\Sigma$ is a singleton, then $NP=P$, see for example \cite{Creszenci} (theorem 5.7, page 87)} the conjecture $NP=P$. We use to call an alphabet reasonable whenever it has at least two symbols. 

\begin{definition}
  A function  $f:\mathbb{N}\longrightarrow\mathbb{N}$ is super-polynomial if and only for any polynomial $n^p$, $p>1$, $p\in \mathbb{N}$, $f(n)$ is bigger than $n^p$ for almost all $n\in\mathbb{N}$. Formaly, for any $p\in\mathbb{N}$, $p>1$, there are $n_0\in\mathbb{N}$ and $c\in\mathbb{R}$, $p\geq 1$, $c>0$, such that, $\forall n>n_0$, $f(n)\ge c\times n^p$.
\end{definition}

It is worth noting that the constant $c$ in the definition above represents the scale invariance typical whenever we compare computational complexities.
Technically, the above definition says that a function is super-polynomial whenever it is lower-bounded by any polynomial.  

Consider a property $\Phi(x)$ on N.D. proof-trees. This property is used to select, from a set $\mathcal{S}$ of proof-trees, all the proof-trees satisfying it. This defines a subset $\{\Pi\in \mathcal{S}:\Phi(\Pi)\}$ of $\mathcal{S}$. As an example we can set a particular $\Phi_{\Gamma,\alpha}(x)$, where $\Gamma$ is a set of \mil-formulas and $\alpha$ is a \mil-formula, to be true only on N.D. proof-trees $\Pi$, such that $\Gamma$ is the set of open assumptions and $c(\Pi)=\alpha$. Thus, given a set $\mathcal{S}$ of proof-trees, the set $\{\Pi\in \mathcal{S}:\Phi_{\Gamma,\alpha}(\Phi)\}$ is the subset of all ND proof-trees from $\mathcal{S}$ that are derivations of $\alpha$ from $\Gamma$. We further refine this to get the set of all minimal trees (derivations) of $\alpha$ from $\Gamma$. For example
\[
Min_{\mathcal{S}}(\Gamma,\alpha)=\{\Pi\in\mathcal{S}:\mbox{$\Phi_{\Gamma,\alpha}(\Pi)\;\land\;\forall \Pi^{\prime}(\Phi_{\Gamma,\alpha}(\Pi^{\prime})\;\rightarrow\; \size{\Pi}\leq\size{\Pi^{\prime}})$}\}
\]
The above set is the set of the smallest  N.D. proof-trees that satisfy $\Phi_{\Gamma,\alpha}(x)$. They are the set of all smallest derivations of $\alpha$ from $\Gamma$ in $\mil$. In the general case, where the predicate $\Phi(x)$ is arbitrary, we denote the set above by $Min_{\mathcal{S}}(\Phi)$, that is:
\[
Min_{\mathcal{S}}(\Phi)=\{\Pi\in\mathcal{S}:\mbox{$\Phi(\Pi)\;\land\;\forall \Pi^{\prime}(\Phi(\Pi^{\prime})\;\rightarrow\; \size{\Pi}\leq\size{\Pi^{\prime}})$}\}
\]

\begin{definition}\label{def:FuncaoF}
  Let $\mathcal{S}$ be an unlimited set of N.D. proof-trees. Let $\Phi(x)$ represent a property on N.D. proof-trees of $\mathcal{S}$ and let $\Phi_{\mathcal{S},m}(x)$ be defined as $(x\in\mathcal{S}\;\land\;\Phi(x)\;\land\;\size{c(x)}\leq m)$ with $0<m\in\mathbb{N}$. We define the function $F_{\mathcal{S},\Phi}:\mathbb{N}\longrightarrow\mathbb{N}$ that associates do each natural number $m$ the least N.D. proof-tree satisfying $\Phi_{\mathcal{S},m}(x)$.
  \[
  F_{\mathcal{S},\Phi}(m)=\left\{\begin{array}{ll} 0 & \mbox{if $m=0$} \\
                                      \size{Min_{\mathcal{S}}(\Phi_{\mathcal{S},m})} & \mbox{if $m>0$}               
  \end{array}\right.
  \]
\end{definition}

We point out that depending on $\Phi$, the above function $F_{\mathcal{S},\Phi}(m)$ can be quite uninteresting. For example, if $\Phi$ is satisfiable by every ND proof-tree in $\mathcal{S}$ then  $F_{\mathcal{S},\Phi}(m)=1$, for every $m>0$. Any ND proof-tree $\Pi$ with only one node,  such that, it is $c(\Pi)$, is a smallest\footnote{We do not take the null tree in this work since it represents no meaningful representation of data in our case} N.D. proof-tree that satisfies $\Phi$. On the other hand, we can have $\Phi_{\mathcal{S},m}(\Pi)$ true only when $\Pi$ is a proof-tree that represents a proof of a \mil tautology $\alpha$, $m=\size{\alpha}$.

The following proposition points out an alternative and somtimes more adequate definition for a family of super-polynomially sized proof-trees as already previously mentioned. Observe that if $\mathcal{A}$ is the set of all proof-trees and $\Phi(x)$ is a property defining a subset $\mathcal{S}$ of $\mathcal{A}$ and $\Phi_{\mathcal{S},m}$ is defined as in definition~\ref{def:FuncaoF} then $\mathcal{S}=\Phi(\mathcal{A})=\bigcup_{m\in\mathbb{N}}\Phi_{\mathcal{A},m}(\mathcal{A})$. The reader should note that we use $\Phi(\mathcal{A})$ as an abbreviation of $\{\Pi:\mbox{$\Phi(\Pi)\land \Pi\in\mathcal{A}$}\}$. Observing what is discussed in the last paragraphs, we have the following proposition.

\begin{proposition}\label{prop:FuncaoF}
  Let $\mathcal{S}\subset\mathcal{A}$ be an unlimited set of proof-trees. Let $\Phi(x)$ be the defining property of $\mathcal{S}$. We have then that  $\mathcal{S}$ is \SP\; if and only if $F_{\mathcal{A},\Phi}$ is a super-polynomial  function from $\mathbb{N}$ in $\mathbb{N}$.
  \end{proposition}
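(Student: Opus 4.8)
The plan is to prove Proposition~\ref{prop:FuncaoF} by chasing definitions in both directions, preceded by one reformulation. Since $\Phi$ is the property defining $\mathcal{S}$ inside $\mathcal{A}$, for $m>0$ the value $F_{\mathcal{A},\Phi}(m)$ of Definition~\ref{def:FuncaoF} is the size of a smallest proof-tree $\Pi\in\mathcal{S}$ with $\size{c(\Pi)}\le m$ (and $0$ when there is none). I would record this first, since with it in hand each implication reduces to aligning the quantifier blocks of Definition~\ref{SP:local} with those in the definition of a super-polynomial function.

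For ``$F_{\mathcal{A},\Phi}$ super-polynomial $\Rightarrow$ $\mathcal{S}$ is $\mathcal{SP}$'' --- which I expect to be the routine half --- I would fix $p$, invoke super-polynomiality of $F_{\mathcal{A},\Phi}$ to obtain $n_0$ and $c>0$ with $F_{\mathcal{A},\Phi}(n)\ge c\,n^p$ for all $n>n_0$, and take an arbitrary $\Pi\in\mathcal{S}$ whose conclusion has size $n>n_0$. Since $\size{c(\Pi)}=n\le n$, $\Pi$ is one of the proof-trees over which the minimum defining $F_{\mathcal{A},\Phi}(n)$ is taken, so $\size{\Pi}\ge F_{\mathcal{A},\Phi}(n)\ge c\,n^p$; as $p$ and $\Pi$ were arbitrary, this is exactly Definition~\ref{SP:local}. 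For the converse I would fix $p$, use $\mathcal{SP}$-ness of $\mathcal{S}$ to obtain $n_0,c$ with $\size{\Pi}\ge c\,n^p$ for every $\Pi\in\mathcal{S}$ whose conclusion has size $n>n_0$, and then try to transfer this lower bound to $F_{\mathcal{A},\Phi}$.

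The step I expect to be the genuine obstacle is exactly this transfer. A proof-tree $\Pi^{\star}$ realising $F_{\mathcal{A},\Phi}(m)$ need only have conclusion of \emph{some} size $n^{\star}\le m$, so $\size{\Pi^{\star}}\ge c\,(n^{\star})^{p}$ is informative only once one knows $n^{\star}>n_0$ and that $n^{\star}$ is within a bounded factor of $m$. I would handle this by working with the sets actually at stake ($S_{\Phi}$, $\Theta_{\mil}$, and more generally those where conclusions of all sufficiently large sizes occur), for which the minimum defining $F_{\mathcal{A},\Phi}(m)$ is effectively over proof-trees whose conclusion has size proportional to $m$; rescaling $c$ then yields $F_{\mathcal{A},\Phi}(m)\ge c'\,m^{p}$ for all large $m$. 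The clean way to avoid the difficulty altogether is to read the side condition in $\Phi_{\mathcal{A},m}$ as $\size{c(x)}=m$ rather than $\le m$ --- which must in any case be the intended reading, since under $\le m$ the function $F_{\mathcal{A},\Phi}$ is non-increasing and hence eventually constant, never super-polynomial --- at the mild cost of considering only those $m$ that arise as conclusion sizes in $\mathcal{S}$.
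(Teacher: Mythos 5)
The paper states Proposition~\ref{prop:FuncaoF} without any proof --- it is introduced only with ``Observing what is discussed in the last paragraphs, we have the following proposition'' --- so there is nothing to compare your argument against except the definitions themselves. Measured against those, your argument is correct where the proposition is provable, and your diagnosis of Definition~\ref{def:FuncaoF} is a genuine catch rather than a pedantic one: with the side condition $\size{c(x)}\leq m$ as literally written, the pool of candidates over which the minimum is taken only grows with $m$, so $F_{\mathcal{A},\Phi}$ is non-increasing on $m>0$, hence eventually constant, hence never super-polynomial; the ``only if'' direction would then be false for every \SP\ set $\mathcal{S}$. Reading the condition as $\size{c(x)}=m$ (consistent with the exact-length convention the paper itself uses for $CC(n)$ and in Definition~\ref{SP:local}) and quantifying only over those $m$ realized as conclusion sizes in $\mathcal{S}$ is exactly the repair needed, and under it both of your quantifier-chasing directions go through as you describe: the easy direction because any $\Pi\in\mathcal{S}$ with $\size{c(\Pi)}=n$ is a candidate in the minimum defining $F_{\mathcal{A},\Phi}(n)$, and the converse because a minimizer realizing $F_{\mathcal{A},\Phi}(m)$ then has conclusion of size exactly $m>n_0$ and so inherits the bound $c\,m^p$ directly. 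The one point I would tighten is your fallback via ``conclusions of size proportional to $m$'' for $S_{\Phi}$ and $\Theta_{\mil}$: that route requires an explicit hypothesis relating the minimizer's conclusion size $n^{\star}$ to $m$ which the paper nowhere supplies, so it is cleaner to rely solely on the $=m$ reading.
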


In this article, we are interested in families of super-polynomial proofs. In particular, a family $\mathcal{S}$ of super-polynomial proofs in \mil is an unlimited set of proof-trees, satisfying definition~\ref{SP:local}. The proposition above provides the soundness of definition~\ref{SP:local} concerning the lower-bound for a set of computational objects (proofs). In section~\ref{sec:Main} we show that any set of \SP\ family of proofs is intrinsically redundant, i.e., almost all of its elements have super-polynomially many repetitions of a  pattern.


\end{document}